\newcommand\algorithmicprocedure{\textbf{procedure}}
\newcommand{\algorithmicendprocedure}{\algorithmicend\ \algorithmicprocedure}
\newcommand\PROCEDURE[3][default]{%
  \ALC@it
  \algorithmicprocedure\ \textsc{#2}(#3)%
  \ALC@com{#1}%
  \begin{ALC@prc}%
}
\newcommand\ENDPROCEDURE{%
  \end{ALC@prc}%
  \ifthenelse{\boolean{ALC@noend}}{}{%
    \ALC@it\algorithmicendprocedure
  }%
}
\newenvironment{ALC@prc}{\begin{ALC@g}}{\end{ALC@g}}
\DeclareMathOperator{\cov}{\bf cov}
\DeclareMathOperator{\expt}{\mathbb{E}}
\DeclareMathOperator{\vari}{\mathbb{V}}
\DeclareMathOperator{\proj}{\textrm{proj}}
\newtheorem{theorem}{Theorem}
\newtheorem{lemma}{Lemma}
\newtheorem{cor}{Corollary}
\begin{document}
\title{Opportunistic Scheduling Using Statistical Information of Wireless Channels}

\author{
  \IEEEauthorblockN{Zhouyou Gu,  Wibowo Hardjawana, Branka Vucetic}\\
    \thanks{
    The work of Zhouyou Gu was in part by an Australian Government Research Training Program Scholarship, in part by the University of Sydney's Supplementary and Completion Scholarships and in part by the Australian Research Council (ARC) Discovery grant number DP210103410.
    The work of Wibowo Hardjawana was supported in part by the ARC Discovery grant number DP210103410.
    The work of Branka Vucetic was supported in part by the ARC Laureate Fellowship grant number FL160100032 and in part by the ARC Discovery grant number DP210103410.
    \emph{(Corresponding author: Zhouyou Gu.)}}
    \thanks{Z. Gu, W. Hardjawana, and B. Vucetic are with the School of Electrical and Information Engineering, University of Sydney, Sydney, NSW 2006, Australia (email: \{zhouyou.gu, wibowo.hardjawana, branka.vucetic\}@sydney.edu.au).}
    }
\maketitle
%\vspace{-1.5cm}
\begin{abstract}
This paper considers opportunistic scheduler (OS) design using statistical channel state information~(CSI). We apply max-weight schedulers (MWSs) to maximize a utility function of users' average data rates. MWSs schedule the user with the highest weighted instantaneous data rate every time slot.
Existing methods require hundreds of time slots to adjust the MWS's weights according to the instantaneous CSI before finding the optimal weights that maximize the utility function.
In contrast, our MWS design requires few slots for estimating the statistical CSI. Specifically, we formulate a weight optimization problem using the mean and variance of users' signal-to-noise ratios (SNRs) to construct constraints bounding users' feasible average rates. Here, the utility function is the formulated objective, and the MWS's weights are optimization variables. We develop an iterative solver for the problem and prove that it finds the optimal weights.
We also design an online architecture where the solver adaptively generates optimal weights for networks with varying mean and variance of the SNRs.
Simulations show that our methods effectively require $4\sim10$ times fewer slots to find the optimal weights and achieve $5\sim15\%$ better average rates than the existing methods.
\end{abstract}
%\vspace{-0.2cm}
\begin{IEEEkeywords} 
%\vspace{-0.2cm}
Opportunistic Scheduling, max-weight schedulers, optimization.
\end{IEEEkeywords}
\section{Introduction}
Time-varying channels limit the performance of multi-user wireless networks \cite{viswanath2002opportunistic}. 
Scheduling strategies for users' transmissions according to the stochastic variation of channel states is the key to optimizing the long-term system objectives of wireless networks.
These objectives are framed as utility functions specifically designed to track networks' performance metrics \cite{asadi2013survey}, such as fairness or maximization of data rates.
Such scheduling strategies are referred to as opportunistic schedulers~(OSs). 
OSs schedule users with good channels and do not schedule those users when their channel qualities are bad. Thus, OSs can achieve higher spectrum efficiency than other schedulers considering no channel variation, e.g., random and round-robin schedulers \cite{ouyang2015exploiting,kawser2012performance}.
Two OS classes have been reported in the open literature \cite{asadi2013survey,neely2019convergence}, Markov decision process (MDP)-based OSs \cite{zhang2008approximate,ouyang2015exploiting,borkar2017opportunistic,kadota2018scheduling,chen2021bringing,wang2018deep,gu2021knowledge} and max-weight schedulers (MWSs) \cite{tse1999transmitter,agrawal2002optimality,borst2001dynamic,liu2003framework,neely2008fairness,mandelli2019satisfying}.

MDP-based OSs maximize the long-term utility function by calculating the optimal selection of users to be scheduled in each channel state, assuming the full prior knowledge of the statistical channel state information (CSI), e.g., the transition probabilities of the channel states \cite{zhang2008approximate,ouyang2015exploiting,borkar2017opportunistic,kadota2018scheduling}.
The calculated optimal user selections are saved in a lookup table that is referred to for the channel state in each time slot \cite{sutton2011reinforcement}.
Alternative to the above tabular approach, the MDP-based OS also uses a neural network (NN) to map the channel state into the optimal user selection in each slot. The NN's parameters can then be stochastically optimized by deep reinforcement learning (DRL) based on the channel states, user selections and well-designed reward signals in every slot \cite{chen2021bringing,wang2018deep,gu2021knowledge} without any knowledge of the statistical CSI. As the NN contains many parameters that require optimization, DRL methods take a long time to find the optimal MDP-based OS \cite{gu2021knowledge}.

The second OS class, namely MWSs \cite{tse1999transmitter,agrawal2002optimality,borst2001dynamic,liu2003framework,neely2008fairness,mandelli2019satisfying}, schedules a user with the highest weighted instantaneous utility, e.g., the highest weighted instantaneous rate, in each time slot. 
These methods continuously adjust the MWS weights in each slot based on every past channel state and user selection to maximize the long-term utility function. This approach requires no prior knowledge of the statistical CSI.
Since MWSs only need to optimize a vector of weights, they have a much lower implementation complexity than the lookup table or the NN of MDP-based OSs mentioned above.
Unfortunately, these MWS approaches still require hundreds of time slots in trials of adjusting weights before they find optimal ones, leading to suboptimal system performance during the weight adjustment.
Applying the statistical CSI in the MWS design can save time slots in the online weights adjustment \cite{neely2019convergence,huang2014power}. 
For example, an online algorithm is developed in \cite{huang2014power} to measure the full probability distribution of discrete channel states. This algorithm calculates the optimal MWS weights to maximize the expected utility function based on the measured distribution. Therefore, it quickly converges to the optimal weights, particularly when the number of possible channel states is limited. 
However, in practice, the channel states are measured as continuous numbers \cite{boussad2021evaluating}. Estimating their distributions requires discretization and numerous time slots to ensure all states are sufficiently counted.
It requires further investigation of the MWS design that leverages the prior statistical information of continuous channel states to reduce the time cost.

In this paper, we propose a new method that uses the limited prior knowledge of the statistical CSI to effectively reduce the number of time slots required in the MWS design.
We find the optimal MWS weights to maximize the utility function as the sum of the logs of users' average scheduled bit rates in the multi-user wireless network \cite{asadi2013survey,neely2019convergence,tse1999transmitter,agrawal2002optimality}.
In this work, users' signal-to-noise ratios (SNRs) are considered as the CSI, which can be measured as continuous variables from radio signals, e.g., 5G networks' CSI reference signals \cite{boussad2021evaluating,3gpp.38.214}. 
We first derive each user's average rate for given MWS weights from the full prior knowledge
of the statistical CSI, namely the probability density functions
(PDFs) of users' SNRs \cite{liu2011asymptotic,holtzman2001asymptotic,choi2015throughput}.
However, the full distribution of SNRs requires a significant amount of time to measure, and the computation of users' rates is difficult because it needs to compute the integral of the distributions.
To use the limited prior knowledge of the statistical CSI instead, we re-derive the computation of users' average rates for given MWS weights as an optimization problem only based on the mean and variance (i.e., the first and second moment of the PDF \cite{feller2008introduction}) of users' SNRs, referred to as a rate estimation problem.
Here, we use the mean and variance of SNRs to construct constraints that bound users' feasible average data rates.
Next, we formulate mean-variance-based weight optimization (MVWO) that maximizes the above utility function.
We construct this problem as a bi-level optimization problem (BLOP) \cite{colson2007overview,sinha2017review} with the MWS weights as optimization variables. The rate estimation problem is embedded in the BLOP to specify the average rates at given weights.
We design an iterative solver for the BLOP and mathematically prove that it returns the optimal MWS weights in MVWO.
Furthermore, since real-world networks have time-varying mean and variance of SNRs, e.g., due to the mobility of users, we study how to use the proposed MVWO method to adjust the MWS weights based on online SNR measurements.

We summarize the main contributions of this paper to the literature as follows,
\begin{itemize}
\item
To the best of our knowledge, this work is the first that proposes to design MWSs based on the limited prior knowledge of the statistical CSI, namely the mean and variance of users' SNRs, which costs a few samples to estimate. Our methods can be applied to continuous channel states by estimating their mean and variance. This is unlike the existing channel-statistics-based method \cite{huang2014power} that estimates the whole distribution of the channel states, requiring a much larger number of samples (or time slots) for the estimation.
Also, the proposed method reduces the time complexity (i.e., the number of time slots required) in optimizing the MWS weights $4\sim10$ times compared to the stochastic methods \cite{tse1999transmitter,agrawal2002optimality} that use no prior knowledge of the statistical CSI, as shown by simulations.

\item
We design a new iterative solver to solve the BLOP in MVWO, which has less computational complexity than existing iterative BLOP solvers \cite{kolstad1990derivative,savard1994steepest,colson2007overview,sinha2017review}.
Specifically, the designed solver updates weights in each iteration via normalization and linear combinations of vectors rather than solving an optimization problem in each iteration as the existing iterative BLOP solvers do. It reduces the complexity of weight updates from a polynomial computational complexity in the existing solvers to a linear one in our solver.

\item
We mathematically prove that the solution of the BLOP solver converges to the optimal weights with a convergence error $\hat{\epsilon}$ in $O(\frac{1}{\hat{\epsilon}^2}K\log K)$ iterations, where $K$ is the number of users and $\hat{\epsilon}$ describes how close the solution satisfies the optimality condition of weights.
Note that the optimal weights maximize the utility function in the BLOP.
The simulation results show that the designed iterative solver converges
to the optimal weights in tens of iterations at a high probability, e.g., $90\sim100\%$, when there are up to ten users.

\item
We design an online architecture where the proposed solver continuously adjusts the MWS weights based on the mean and variance of the SNRs measured online.
The simulation results show that the designed architecture achieves $5\sim15\%$ better performance than the existing MWS approaches \cite{tse1999transmitter,agrawal2002optimality} in terms of the geometrical mean of users' average rates (an equivalent expression of the studied utility function).
\end{itemize}

\subsection{Notations}
$\langle\mathbf{x},\mathbf{y}\rangle$ denotes the inner product of $\mathbf{x}$ and $\mathbf{y}$. $\|\mathbf{x}\|_2$ denotes the $\ell_2$-norm of $\mathbf{x}$.
$\mathbf{x}\odot\mathbf{y}$ and $\mathbf{x}\oslash\mathbf{y}$ are element-wise multiplication and division between $\mathbf{x}$ and $\mathbf{y}$, respectively. 
For a $K$-dimensional non-negative/positive vector $\mathbf{x}$, we write $\mathbf{x}>0$ when $x_k>0$ $\forall k =1,\dots,K$, and $\mathbf{x}\geq0$ when $x_k\geq 0$ $\forall k =1,\dots,K$.
We write a tuple as $(\mathbf{x}^{(1)},\dots,\mathbf{x}^{(i)},\dots)$ where $\mathbf{x}^{(i)}$ is the $i$-th element in the tuple. We obtain the $\mathbf{x}^{(i)}$ from the tuple as $\mathbf{x}^{(i)}=\proj_i[(\mathbf{x}^{(1)},\dots,\mathbf{x}^{(i)},\dots)]$.
The geometric mean of elements in a $K$-dimensional vector, $\mathbf{x}$, is expressed as $\textrm{GM}(\mathbf{x}) = (\prod_{k=1}^{K}x_k)^{\frac{1}{K}}$, where $x_k$ is the $k$-th element of $\mathbf{x}$.
A $K\times K$ positive semidefinite matrix $\mathbf{X}$ is denoted as $\mathbf{X} \succeq 0$, where positive semidefiniteness implies $\mathbf{z}^{\rm T}\mathbf{X}\mathbf{z} \geq 0$, $\forall \mathbf{z} \in \mathbb{R}^{K}$. We also summarize the frequently used symbols in Table \ref{tab:symbols}.

\begin{table}[t]
\caption{Summary of Symbols}\label{tab:symbols}
\small
\begin{tabular}{c|l}
\hline\hline
$K$ & number of users \\
$\Delta_\textrm{0}$ & duration of a time slot \\
$B$ & bandwidth of the channel \\
$\mathbf{x}(t)$ & user scheduling actions in the $t$-th slot \\
$x_k(t)$ & user $k$'s scheduling action in the $t$-th slot\\
$\mathbf{s}(t)$ & users' channel states (or SNRs) in the $t$-th slot \\
$\phi_k(t)$ & user $k$'s SNR in the $t$-th slot\\
$\mu(\cdot|\mathbf{w})$ & MWS with weights $\mathbf{w}$ \\
$\mathbf{r}^{\sim\mu(\cdot|\mathbf{w})}$ & users' data rates achieved by the MWS $\mu(\cdot|\mathbf{w})$\\
$r_k^{\sim\mu(\cdot|\mathbf{w})}$ & user $k$'s data rate achieved by the MWS $\mu(\cdot|\mathbf{w})$\\
$f(\cdot)$ & utility function of users' rates\\
$\mathcal{F}$ & feasible rate region of users\\
$m^{\phi}_{k}$ & mean of user $k$'s SNR\\
$v^{\phi}_{k}$ & variance of user $k$'s SNR\\
$\mathbf{r}$ & users' rates achieved by any valid scheduler\\
$\mathbf{p}$ & expected value of $\mathbf{x}(t)$ over time\\
$\mathbf{y}$ & expected value of users' SNRs in scheduled slots\\
$\mathbf{H}$ & covariance between $\mathbf{x}(t)$ and $\mathbf{s}(t)$\\
$\mathcal{G}$ & estimated feasible rate region of users\\
$\mathbf{r}_\mathcal{G}^{\sim\mu(\cdot|\mathbf{w})}$ & estimated rates $\mathbf{r}^{\sim\mu(\cdot|\mathbf{w})}$ achieved by $\mu(\cdot|\mathbf{w})$\\
$\mathbf{r}^*$ & optimal $\mathbf{r}$ in $\mathcal{G}$ that maximizes $f(\mathbf{r})$\\
$\mathbf{w}^*$ & optimal $\mathbf{w}$ that maximizes $f(\mathbf{r}_\mathcal{G}^{\sim\mu(\cdot|\mathbf{w})})$ \\
$\mathbf{w}^{(i)}$ & MWS's weights in the $i$-th iteration of the solver\\
$\mathbf{r}^{(i)}$ & estimated rates $\mathbf{r}^{\sim\mu(\cdot|\mathbf{w}^{(i)})}$ achieved by $\mu(\cdot|\mathbf{w}^{(i)})$ \\
$\mathbf{u}^{(i)}$ & intermediary vector to compute $\mathbf{w}^{(i+1)}$ \\
$a^{(i)}$, $b^{(i)}$& scaling factors of $\mathbf{w}^{(i)}$ when constructing $\mathbf{w}^{(i+1)}$  \\
$\hat{\epsilon}$ & convergence error of the solver\\
$\hat{R}$ & maximum distance between any two vectors in $\mathcal{G}$ \\
\hline\hline

\end{tabular}
\end{table}

\section{System Model and Problem Formulation of Opportunistic Scheduling}\label{sec:system_model}
\begin{figure}[t]
\begin{subfigure}[b]{1\columnwidth}
\centering
\includegraphics[scale=0.9]{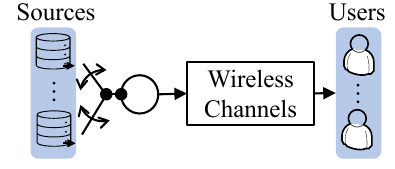}
\caption{}
\label{fig:op_sch}
\end{subfigure}
\begin{subfigure}[b]{1\columnwidth}
\centering
\includegraphics[scale=0.9]{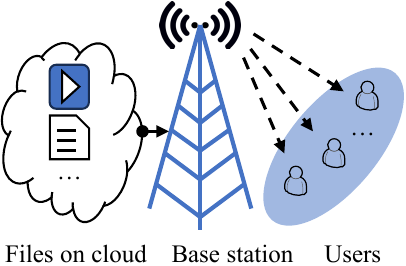}
\caption{}
\label{fig:application_illustration}
\end{subfigure}
\caption{(a) Illustration of a wireless scheduler, (b) illustration of the application scenarios.}
\vspace{-0.2cm}
\end{figure}

This section presents the system model of the multi-user wireless network and formulates the MWS design problem. It also studies the issues of using PDFs of users' SNRs to design MWSs.
\subsection{System Model}
We consider a wireless link of the base station (BS) shared by $K$ users in time slots to download data from data sources, as shown in Fig. \ref{fig:op_sch}. 
We assume that each user is using the enhanced mobile broadband service \cite{popovski20185g} to download a large file or video, which requires the BS to transmit a significant amount of data over time, as shown in Fig. \ref{fig:application_illustration}. In such scenarios, we can assume that users have infinite backlogs at the base station \cite{chen2013measurement}. This means the BS always has data ready to be transmitted for each user. Also, we do not consider any queuing states (e.g., queuing delay).
The duration of each slot in the system is $\Delta_\text{0}$ in seconds. The bandwidth of the link is $B$ in Hertz. A binary indicator of the user selection decision, $x_k(t)$, represents whether the BS transmits user $k$'s data, $k=1,\dots,K$, in the $t$-th slot or not, $t=1,2,\dots$, i.e.,
\begin{equation}\label{eq:const:channel_access_binary}
\begin{aligned}
x_k(t) \in \{0,1\}\ ,\ \forall t, k \ .
\end{aligned}
\end{equation}
For example, $x_k(t)=1$, if user $k$ occupies the $t$-th slot and the BS transmits this user's data, otherwise $x_k(t)=0$.
We assume that only one user can access the channel in each slot as
\begin{equation}\label{eq:const:channel_access}
\begin{aligned}
\sum_{k=1}^{K} x_k(t) \leq 1 \ ,\ \forall t \ .
\end{aligned}
\end{equation}
The user scheduling actions for all users in the $t$-th slot are defined as
\begin{equation}\label{eq:defi:x}
\begin{aligned}
\mathbf{x}(t)\triangleq[x_1(t),\dots,x_K(t)]^{\rm T} \ .
\end{aligned}
\end{equation}
We assume that all users' channels have stationary SNRs that are continuous random variables. User $k$'s SNR in the $t$-th slot is denoted as $\phi_k(t)$ and is assumed to be i.i.d. in each slot, $\forall k$.
The channel state of each slot is defined as
\begin{equation}\label{eq:defi:s}
\begin{aligned}
\mathbf{s}(t)\triangleq[\phi_1(t),\dots,\phi_K(t)]^{\rm T} \ , \forall t \ .
\end{aligned}
\end{equation}
SNRs of different users are assumed to be independent.
The spectrum efficiency of user $k$ in the $t$-th slot is calculated by using Shannon capacity as $\log_2\big(1+\phi_k(t)\big)$.
The amount of instantaneous bits scheduled for user $k$ in the $t$-th slot is then given as $x_k(t)\Delta_\text{0} B \log_2\big(1+\phi_k(t)\big)$.

\subsection{MWS Design Problem Formulation}
We use MWSs as OSs to schedule users and decide $\mathbf{x}(t)$ at each time slot. 
We first define $\mathbf{w} \triangleq [w_1,\dots,w_K]^{\rm T}$ as weights of the MWS for $K$ users, which are fixed over $T$ slots, and we assume the weights are normalized and positive as
\begin{equation}\label{eq:weight_non_negative_normalized}
\begin{aligned}
\|\mathbf{w}\|_2=1 \ , \ \mathbf{w} > 0 .
\end{aligned}
\end{equation}
The MWS $\mu(\cdot|\mathbf{w})$ with weights $\mathbf{w}$ decides the user scheduling action $\mathbf{x}(t)$ based on the channel state $\mathbf{s}(t)$ at the $t$-th slot as 
\begin{equation}\label{eq:max_weight_scheduler}
\begin{aligned}
\forall t,\ &\mathbf{x}(t) = [x_1(t),\dots,x_K(t)]^{\rm T} = \mu\big(\mathbf{s}(t)|\mathbf{w}\big) \\
\triangleq & \arg\max_{\mathbf{x}'(t)}\sum_{k=1}^{K} x'_k(t)  w_k \Delta_\text{0} B \log_2\big(1+\phi_k(t)\big) \ \text{s.t.}\  \eqref{eq:const:channel_access_binary}\ \eqref{eq:const:channel_access} \ .
\end{aligned}
\end{equation}
In each slot, the MWS in \eqref{eq:max_weight_scheduler} schedules user $k$ with the highest weighted instantaneous bit rate $w_k \Delta_\text{0} B \log_2\big(1+\phi_k(t)\big)$ than other users, and it sets the scheduling actions as $x_k(t)=1$ and $x_{k'}(t)=0$, $\forall k'\neq k$. Here, $\mathbf{w}$ defines the MWS's criterion of selecting the user in each slot.
The average rates achieved by MWSs for users $1,\dots,K$, $\mathbf{r}^{\sim\mu(\cdot|\mathbf{w})}\triangleq[r_1^{\sim\mu(\cdot|\mathbf{w})},\dots,r_K^{\sim\mu(\cdot|\mathbf{w})}]^{\rm T}$, can then be written as
\begin{equation}\label{eq:defi:average_mws_rate}
\begin{aligned}
&r_k^{\sim\mu(\cdot|\mathbf{w})}
= \lim_{T\to \infty} \frac{1}{T} \sum_{t=1}^{T} x_k(t) \Delta_\text{0} B \log_2\big(1+\phi_k(t)\big) \\
= &\expt[x_k(t) \Delta_\text{0} B \log_2\big(1+\phi_k(t)\big) | \mathbf{x}(t) =  \mu\big(\mathbf{s}(t)\big|\mathbf{w}\big),\forall t] , 
\end{aligned}
\end{equation}
where $\mu\big(\mathbf{s}(t)\big|\mathbf{w}\big)$ is defined in \eqref{eq:max_weight_scheduler}. The utility function of the system is the sum of the logs (or proportional fairness \cite{tse2005fundamentals,kelly1998rate}) of users' average rates expressed as
\footnote{Note that this work considers proportional fairness of users' data rates as the objective, while our methods designed later can also be applied to other general concave objective functions of users' rate (assuming maximization of the objective function), e.g., max-min fairness and $\alpha$-fairness \cite{mo2000fair}.}
\begin{equation}\label{eq:defi:utility_function_f}
\begin{aligned}
f(\mathbf{r}^{\sim\mu(\cdot|\mathbf{w})})\triangleq \sum_{k=1}^{K}\ln r_k^{\sim\mu(\cdot|\mathbf{w})}\ ,
\end{aligned}
\end{equation}
and it is strictly increasing and concave.
The problem of finding the optimal MWS weights that maximize \eqref{eq:defi:utility_function_f} is
\begin{equation}\label{eq:prob:stochastic_weight_optimization_mu_w}
\begin{aligned}
\max_{\mathbf{w}} f(\mathbf{r}^{\sim\mu(\cdot|\mathbf{w})})\ , \ \textrm{s.t.} \ \eqref{eq:weight_non_negative_normalized} ,\ \eqref{eq:defi:average_mws_rate} \ .
\end{aligned}
\tag{P1}
\end{equation}
To solve \eqref{eq:prob:stochastic_weight_optimization_mu_w}, it needs the numerical value of $\mathbf{r}^{\sim\mu(\cdot|\mathbf{w})}$ at given $\mathbf{w}$ in \eqref{eq:defi:average_mws_rate}, which can be calculated based on the full knowledge of statistical CSI, i.e., PDFs of the SNRs \cite{liu2011asymptotic,holtzman2001asymptotic,choi2015throughput}, as follows.

\subsection{Rate Estimation Using Full Knowledge of Statistical CSI}\label{subsec:pdf-based_rate_estimation}
We note that the values of the MWS weights control the probability that users are selected for every time slot.
Specifically, user $k$'s average rate achieved by the MWS, $r^{\sim\mu(\cdot|\mathbf{w})}_k$, $k=1,\dots,K$, in \eqref{eq:defi:average_mws_rate} can be calculated as
\begin{equation}\label{eq:average_rate_calculated_based_on_pdf}
\begin{aligned}
& r^{\sim\mu(\cdot|\mathbf{w})}_k \\
=& \expt[\Delta_\text{0} Bx_k(t)\log_2\big(1+\phi_k(t)\big)| \mathbf{x}(t) =  \mu\big(\mathbf{s}(t)\big|\mathbf{w}\big),\forall t] \\
 \stackrel{\text{(a)}}{=} & \Delta_\text{0} B \expt[x_k(t)]\expt[\log_2\big(1+\phi_k(t)\big)|x_k(t)=1] \\
= & \Delta_\text{0} B \Pr[x_k(t)=1]\int_\phi q_k(\phi|x_k(t)=1)\log_2(1+\phi)\textrm{d} \phi \\
\stackrel{\text{(b)}}{=} & \Delta_\text{0} B \int_\phi q_k(\phi) \Pr[x_k(t)=1|\phi_k(t)=\phi] \log_2(1+\phi)\textrm{d} \phi, \ \forall k , 
\end{aligned}
\end{equation}
where (a) is because $x_k(t)$ is binary and (b) uses Bayes' theorem \cite{liu2011asymptotic}.
Here, $q_k(\phi)$ and $q_k(\phi|x_k(t)=1)$ in \eqref{eq:average_rate_calculated_based_on_pdf} are the PDF of user $k$'s SNRs in all $T$ time slots and in those time slots where this user is scheduled, respectively.
$\Pr[x_k(t)=1|\phi_k(t)=\phi]$ in \eqref{eq:average_rate_calculated_based_on_pdf} is the probability that user $k$ is scheduled for a given SNR value $\phi$ in the $t$-th slot, which is calculated for the MWS $\mu(\cdot|\mathbf{w})$ as 
\begin{equation}\label{eq:probability_x=1|phi}
\begin{aligned}
&\Pr[x_k(t)=1|\phi_k(t)=\phi] \\
\stackrel{\text{(a)}}{=}&  \prod_{j\neq k} \Pr[w_j\log_2\big(1+\phi_j(t)\big)<w_k\log_2\big(1+\phi\big)] \\
= &\prod_{j\neq k} \Pr[\phi_j(t)<(1+\phi)^\frac{w_k}{w_j}-1]\\
= &\prod_{j\neq k} \int_0^{(1+\phi)^\frac{w_k}{w_j}-1} q_j(\psi)\textrm{d} \psi \ , \ \forall k \ , t \ , \mathbf{w} \ ,
\end{aligned}
\end{equation}
where (a) uses the definition of the MWS in \eqref{eq:max_weight_scheduler} that a user occupies a slot when it has the highest weighted spectrum efficiency than other users in this slot. By substituting \eqref{eq:probability_x=1|phi} into \eqref{eq:average_rate_calculated_based_on_pdf}, the average rates of MWSs can be rewritten as
\begin{equation}\label{eq:average_rate_calculated_based_on_pdf_final}
\begin{aligned}
r^{\sim\mu(\cdot|\mathbf{w})}_k=&\Delta_\text{0} B \int_\phi q_k(\phi)\Big(\prod_{j\neq k}\int_0^{(1+\phi)^\frac{w_k}{w_j}-1} q_j(\psi)\textrm{d} \psi \Big) \\
&\qquad\qquad\qquad \cdot \log_2(1+\phi)\textrm{d} \phi,\ \forall k , \ \mathbf{w} \ . 
\end{aligned}
\end{equation}
Here, $r^{\sim\mu(\cdot|\mathbf{w})}_k$ can be calculated if PDFs of all users' SNRs are known. 
Such an approach for the calculation of $\mathbf{r}^{\sim\mu(\cdot|\mathbf{w})}$ suffers from two issues. First, estimating PDFs of the SNRs has a high time complexity, or in other words, it requires a large number of time slots to collect samples of channel states such that each state is counted sufficient times. Second, even though the PDFs are obtained, it is difficult to numerically calculate the integrals in \eqref{eq:average_rate_calculated_based_on_pdf_final}, and additional estimators of the integrals are required \cite{liu2011asymptotic}. These issues in PDF-based rate estimation can hardly be addressed. Thus, we propose a new method to estimate the average rates in the sequel.

\section{Rate Estimation \\ Using Mean and Variance of SNRs}
This section proposes a new method for calculating the average rates $\mathbf{r}^{\sim\mu(\cdot|\mathbf{w})}$ in \eqref{eq:defi:average_mws_rate} achieved by the MWS $\mu(\cdot|\mathbf{w})$ in order to solve \eqref{eq:prob:stochastic_weight_optimization_mu_w}.
To estimate the average rates achieved by the MWS, we need the information on how good is each user's channel, which can be expressed by the mean of the SNRs. Further, since the MWS is an opportunistic scheduler, it schedules the user when the user's channel is relatively better than other users (e.g., the user has the highest weighted instantaneous bit rate), as shown in \eqref{eq:max_weight_scheduler}. Note that channel states (or SNRs) are random numbers. Therefore, to accurately estimate the rates, we need to characterize the randomness of the channel states, e.g., using the variance of SNRs.
Here, we denote the mean and the variance of user $k$' SNR $\phi_k(t)$ as
\begin{equation}\label{eq:defi:m_v_snr}
\begin{aligned}
m^{\phi}_{k}   \triangleq \expt[\phi_k(t)]\ , \
v^{\phi}_{k}   \triangleq \vari[\phi_k(t)] \ , \forall k\ .
\end{aligned}
\end{equation}

The feasible rate region $\mathcal{F}$ contains all possible average rates $\mathbf{r}=[r_1,\dots,r_K]^{\rm T}$ achieved by any valid schedulers as
\begin{equation}\label{eq:defi:feasible_rate_region}
\begin{aligned}
&\mathcal{F}\triangleq\Big\{\mathbf{r}| \forall k \ , r_k = \lim_{T\to \infty} \frac{1}{T} \sum_{t=1}^{T} x_k(t)\Delta_\text{0} B \log_2\big(1+\phi_k(t)\big), \\ & \qquad\qquad\qquad\qquad\qquad\qquad \text{s.t.}\  \eqref{eq:const:channel_access_binary}, \ \eqref{eq:const:channel_access} \Big\}\ ,
\end{aligned}
\end{equation}
which is a compact convex set based on justification in \cite{agrawal2002optimality}. 
Note that $\mathbf{r}^{\sim\mu(\cdot|\mathbf{w})}$ are feasible rates that belong to $\mathcal{F}$.
As we cannot use \eqref{eq:average_rate_calculated_based_on_pdf_final} directly, we formulate a rate estimation problem that calculates $\mathbf{r}^{\sim\mu(\cdot|\mathbf{w})}$ based on the feasible rate region $\mathcal{F}$ as
\begin{cor}\label{corollary:mws_as_solution_of_weighting_problem}
For any MWS $\mu(\cdot|\mathbf{w})$ defined in \eqref{eq:max_weight_scheduler}, its achieved average rates in \eqref{eq:defi:average_mws_rate} satisfy
\begin{equation}\label{eq:corollary:mws_as_solution_of_weighting_problem}
\begin{aligned}
\mathbf{r}^{\sim\mu(\cdot|\mathbf{w})} = \arg\max_{\mathbf{r}}\langle\mathbf{w},\mathbf{r}\rangle\ , \ \text{s.t.} \ \mathbf{r} \in \mathcal{F} \ .
\end{aligned}
\end{equation}
\end{cor}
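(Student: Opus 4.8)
The plan is to establish the claimed equality by a pathwise, slot-by-slot comparison that exploits the defining property of the MWS in \eqref{eq:max_weight_scheduler} together with the i.i.d.\ stationarity of the SNRs. The core observation is that maximizing the linear functional $\langle\mathbf{w},\mathbf{r}\rangle$ over the feasible rate region $\mathcal{F}$ is exactly what the MWS accomplishes, but instant by instant rather than only in the time average. First I would note that $\mathbf{r}^{\sim\mu(\cdot|\mathbf{w})}$ is itself a feasible point, i.e.\ $\mathbf{r}^{\sim\mu(\cdot|\mathbf{w})}\in\mathcal{F}$: the decisions $\mathbf{x}(t)=\mu(\mathbf{s}(t)|\mathbf{w})$ satisfy \eqref{eq:const:channel_access_binary} and \eqref{eq:const:channel_access} by construction, so the induced time average in \eqref{eq:defi:average_mws_rate} has exactly the form appearing in the definition \eqref{eq:defi:feasible_rate_region}. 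It therefore suffices to show that $\mathbf{r}^{\sim\mu(\cdot|\mathbf{w})}$ dominates every other feasible rate vector under $\langle\mathbf{w},\cdot\rangle$.

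The key step is a pointwise inequality. Fix any valid scheduler producing decisions $\{x'_k(t)\}$ obeying \eqref{eq:const:channel_access_binary}, \eqref{eq:const:channel_access} with induced rate vector $\mathbf{r}\in\mathcal{F}$, and write $g_k(t)\triangleq\Delta_\text{0} B\log_2\big(1+\phi_k(t)\big)$. The definition of $\mu(\cdot|\mathbf{w})$ in \eqref{eq:max_weight_scheduler} says precisely that $\mathbf{x}(t)$ maximizes $\sum_k x_k(t)w_k g_k(t)$ over all per-slot actions obeying the same constraints. Hence, for every realization and every slot $t$,
\[
\sum_{k=1}^{K} x_k(t)\,w_k\,g_k(t)\ \geq\ \sum_{k=1}^{K} x'_k(t)\,w_k\,g_k(t).
\]
Note that this holds regardless of how $\{x'_k(t)\}$ is generated (causal or not), since it is a per-slot comparison. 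Averaging over $t=1,\dots,T$ and letting $T\to\infty$, the left side converges to $\langle\mathbf{w},\mathbf{r}^{\sim\mu(\cdot|\mathbf{w})}\rangle$ and the right to $\langle\mathbf{w},\mathbf{r}\rangle$, because the weights $w_k$ are constant over the horizon. This yields $\langle\mathbf{w},\mathbf{r}^{\sim\mu(\cdot|\mathbf{w})}\rangle\geq\langle\mathbf{w},\mathbf{r}\rangle$ for all $\mathbf{r}\in\mathcal{F}$, which together with feasibility proves optimality.

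The main obstacle I anticipate is the limit interchange: I must argue that both time averages exist and that the pathwise inequality survives passage to the limit. For the MWS side this is clean, since the per-slot reward is an i.i.d.\ sequence with finite mean (the SNR moments in \eqref{eq:defi:m_v_snr} are finite), so the strong law of large numbers delivers the expectation form of \eqref{eq:defi:average_mws_rate}; for the competing scheduler the limit exists by the very membership $\mathbf{r}\in\mathcal{F}$. A secondary point is the use of $=$ rather than $\in$ in the $\arg\max$: because the SNRs are continuous random variables, ties in \eqref{eq:max_weight_scheduler} occur with probability zero, so the MWS allocation and hence $\mathbf{r}^{\sim\mu(\cdot|\mathbf{w})}$ is almost surely determined. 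I would close by remarking that the compactness and convexity of $\mathcal{F}$ (from \cite{agrawal2002optimality}) guarantee the linear objective attains its maximum, and the argument above identifies that maximum as the uniquely determined vector $\mathbf{r}^{\sim\mu(\cdot|\mathbf{w})}$, justifying the equality.
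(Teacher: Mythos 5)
Your proposal is correct and follows essentially the same route as the paper's proof: both rest on the per-slot optimality of the MWS in \eqref{eq:max_weight_scheduler}, summed against the weights and passed to the time-average limit to conclude $\langle\mathbf{w},\mathbf{r}^{\sim\mu(\cdot|\mathbf{w})}\rangle\geq\langle\mathbf{w},\mathbf{r}\rangle$ for all $\mathbf{r}\in\mathcal{F}$. Your added remarks on feasibility of $\mathbf{r}^{\sim\mu(\cdot|\mathbf{w})}$, existence of the limits, and almost-sure tie-breaking are sensible refinements of details the paper leaves implicit, but they do not change the argument.
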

\begin{proof}
The proof is in the appendix.
\end{proof}

We aim to use the mean and variance of the SNRs, defined in \eqref{eq:defi:m_v_snr}, to construct a convex set $\mathcal{G}$ that represents $\mathcal{F}$. In this way, we can find an estimation of $\mathbf{r}^{\sim\mu(\cdot|\mathbf{w})}$ via solving the optimization problem at the right-hand side (RHS) of \eqref{eq:corollary:mws_as_solution_of_weighting_problem}, in which $\mathcal{G}$ replaces $\mathcal{F}$. 
To achieve this, we will construct convex expressions\footnote{We will focus on explaining our methods without the explicit proof of the claim on the convexity of inequalities, equalities and optimization problems. If interested, readers can use the checker implemented in \cite{diamond2016cvxpy} to validate the claimed convexity.} as constraints based on the mean and variance of the SNRs to govern randomness of $\mathbf{x}(t)$, $\mathbf{s}(t)$ and the correlation between them, which also specify the range of the value of users' feasible average rates in $\mathcal{F}$.

\subsection{Bounding the Feasible Rate Region $\mathcal{F}$}\label{sec:bounding_set}
First, let us consider any feasible values of the average rate of user $k$, $r_k$, $k=1,\dots,K$, including the average rates achieved by any MWS in \eqref{eq:defi:average_mws_rate}.
We note that the average rates of users are non-negative, which is expressed in a constraint as
\begin{equation}\label{eq:const:rate_lower_bound}
\begin{aligned}
r_k\geq0, \ \forall k\ .
\end{aligned}
\tag{C1}
\end{equation}
An upper bound on any feasible values of user $k$'s rate, $r_k$, $k=1,\dots,K$, is obtained as
\footnote{When the channel bandwidth is wide, e.g., typically in high-frequency bands such as sub-6GHz band, millimeter band and THz band, frequency selective fading can happen due to multipath propagation of the radio signals \cite{serghiou2022terahertz,alrabeiah2020deep}. Specifically, assuming the system uses orthogonal frequency-division multiplexing, we need to consider the fading gain on each subcarrier when computing the instantaneous bit rate in the $t$-th slot for user $k$, whose expression is $x_k(t)\Delta_0 B \expt [\log_2 (1+|h_k|^2\phi_k(t))]$, $\forall k,t$ \cite{tulino2010capacity}. Here, $h_k$ is the random normalized fading gain in different subcarriers. Note that the above expectation is computed with respect to the random variable $h_k$. Regardless of the fading model, the amount of instantaneous bits scheduled for the $k$-th user can be upper-bounded as $x_k(t)\Delta_0 B \expt[\log_2(1+|h_k|^2\phi_k(t))]\leq x_k(t)\Delta_0 B \log_2(1+\expt[|h_k|^2]\phi_k(t)) = x_k(t)\Delta_0 B \log_2(1+\phi_k(t)) $, using Jensen's inequality. This derivation will lead to the same upper bound of users' data rates. Thus, the proposed methods still apply, while how tight the rate estimations for different channel models are requires further investigation.}
\begin{equation}\label{eq:rate_concave}
\begin{aligned}
r_k  = & \Delta_\text{0} B \expt[x_k(t)\log_2\big(1+\phi_k(t)\big)] \\
=& \Delta_\text{0} B \expt[x_k(t)]\expt[\log_2\big(1+\phi_k(t)\big)|x_k(t)=1] \\
\stackrel{\text{(a)}}{\leq} & \Delta_\text{0}B\expt[x_k(t)]\log_2\big(1+\expt[\phi_k(t)|x_k(t) = 1]\big)\ , \ \forall k \ ,
\end{aligned}
\end{equation}
where (a) uses Jensen's inequality on the expected value of the concave function, $\log_2\big(1+(\cdot)\big)$.
Here, $\expt[\phi_k(t)|x_k(t) = 1]$ in \eqref{eq:rate_concave} is each user's average SNR in their scheduled slots, which can be calculated by
\begin{equation}\label{eq:expt_xpsi}
\begin{aligned}
\expt[\phi_k(t)|x_k(t) = 1]  &=\frac{\expt[x_k(t)\phi_k(t)]}{\expt[x_k(t)]} = \frac{y_k}{p_k} \ , \ \forall k\ ,
\end{aligned}
\end{equation}
where $y_k$ and $p_k$ are defined as
\begin{equation}\label{eq:defi:y_k_and_p_k}
\begin{aligned}
y_k\triangleq\expt[x_k(t)\phi_k(t)]\ ,\ p_k\triangleq\expt[x_k(t)]\ , \ \forall k \ .
\end{aligned}
\end{equation}
Then, we substitute \eqref{eq:expt_xpsi} and \eqref{eq:defi:y_k_and_p_k} into \eqref{eq:rate_concave} as
\begin{equation}\label{eq:const:rate_upper_bound}
\begin{aligned}
r_k \leq\Delta_\text{0}B\cdot p_k\cdot\log_2\big(1+\frac{y_k}{p_k}\big), \ \forall k\ .
\end{aligned}
\tag{C2}
\end{equation}
Note that $p_k$ is the expected value of a binary number $x_k(t)$ defined in \eqref{eq:const:channel_access_binary}, which implies
\begin{equation}\label{eq:const:p_K}
\begin{aligned}
0 \leq p_k \leq 1 \ ,\ \forall k \ .
\end{aligned}
\tag{C3}
\end{equation}
The summation of $x_k(t)$ for $k\in\{1,\dots,K\}$ is less than $1$ at each slot, as stated in \eqref{eq:const:channel_access}, leading to a constraint on $p_k$ as
\begin{equation}\label{eq:const:p_K_sum}
\begin{aligned}
\sum_{k=1}^{K} p_k \leq 1 \ .
\end{aligned}
\tag{C4}
\end{equation}

Next, we study the relationship between $y_k$ and $p_k$, $k=1,\dots,K$, in the covariance matrix of binary indicators of user selection decisions $x_k(t)$, and SNRs of users $\phi_k(t)$, $k=1,\dots,K$, as
\begin{equation}\label{eq:covariance_matrix}
\begin{aligned}
\mathbf{H} \triangleq \begin{bmatrix}
\mathbf{H}^{xx}                             & \mathbf{H}^{x\phi}    \\
(\mathbf{H}^{x\phi})^{\rm T}                & \mathbf{H}^{\phi\phi}
\end{bmatrix}\ .
\end{aligned}
\end{equation}
$\mathbf{H}$ is a $2K\times2K$ matrix, and each submatrix in $\mathbf{H}$ has $K\times K$ dimension. $(\cdot)^{\rm T}$ here is the transpose of a matrix.
Elements in the diagonal of $\mathbf{H}^{x\phi}$ are the covariance between $x_k(t)$ and $\phi_k(t)$ and are calculated based on $y_k$ and $p_k$ as
\begin{equation}\label{eq:const:covariance_matrix_xpsi}
\begin{aligned}
H^{x\phi}_{k,k} =  y_k - p_k m^{\phi}_{k} \  , \forall k\ ,
\end{aligned}
\tag{C5}
\end{equation}
where $m^{\phi}_{k}$ is the mean of $\phi_k(t)$, as defined in \eqref{eq:defi:m_v_snr}.

The lower-right part of $\mathbf{H}$, $\mathbf{H}^{\phi\phi}$, is the covariance matrix of $\phi_i(t)$ and $\phi_j(t)$, $i,j\in\{1,\dots,K\}$, whose elements are
\begin{equation}\label{eq:const:covariance_matrix_psipsi}
\begin{aligned}
H^{\phi\phi}_{i,j} 
= \begin{cases}
    v^{\phi}_{k} \ , & \text{if }i=j,\ \\
    0 \ , & \text{if }i\neq j,\
\end{cases}\ \forall i\ , j \in \{1,\dots,K\} \ ,
\end{aligned}
\tag{C6}
\end{equation}
where $v^{\phi}_{k}$ is the variance of of $\phi_k(t)$, as defined in \eqref{eq:defi:m_v_snr}.
Note that the SNRs of users are assumed to be independent. Thus, all off-diagonal elements in $\mathbf{H}^{\phi\phi}$ are $0$ in \eqref{eq:const:covariance_matrix_psipsi}.

$\mathbf{H}^{xx}$ is the covariance matrix of $x_i(t)$ and $x_j(t)$, $i,j\in\{1,\dots,K\}$. 
Note that $x_i(t)$ and $x_j(t)$ are binaries. The covariance of two binary random numbers is calculated as 
\begin{equation}\label{eq:covariance_xx}
\begin{aligned}
H^{xx}_{i,j}  &= \cov(x_i(t),x_j(t))= \expt[x_i(t)x_j(t)]-\expt[x_i(t)]\expt[x_j(t)]\\
&= \Pr[x_i(t)=1,x_j(t)=1]-\expt[x_i(t)]\expt[x_j(t)] \ , \\ & \ \forall i\ , j \in \{1,\dots,K\} \ .
\end{aligned}
\end{equation}
As stated in \eqref{eq:const:channel_access}, if $i\neq j$, $x_i(t)$ and $x_j(t)$ cannot be simultaneously equal to $1$ in a given slot $t$, implying that $\Pr[x_i(t)=1,x_j(t)=1]=0$ if $i\neq j$. Also, when $i=j$, we have $\Pr[x_i(t)=1,x_j(t)=1]=\Pr[x_i(t)=1]=p_i$. Therefore, the elements in $\mathbf{H}^{xx}$ in \eqref{eq:covariance_xx} can be written as 
\begin{equation}\label{eq:covariance_matrix_xx}
\begin{aligned}
H^{xx}_{i,j} =   \begin{cases}
    p_i-p^2_i, & \text{if }\ i=j,\\
    -p_ip_j, & \text{if }\ i\neq j,
  \end{cases}\ \forall i\ , j \in \{1,\dots,K\} \ .
\end{aligned}
\end{equation}
Note that the above equalities are not convex constraints. In order to construct convex constraints for $\mathcal{G}$, we can relax the above constraints in \eqref{eq:covariance_matrix_xx} as convex ones as
\begin{equation}\label{eq:const:covariance_matrix_xx}
\begin{aligned}
H^{xx}_{k,k} \leq p_k-p^2_k \ ,\ \forall k \ ,
\end{aligned}
\tag{C7}
\end{equation}
where constraints on the elements in the main diagonal and the off-diagonal of $\mathbf{H}^{xx}$ are changed to inequalities and removed, respectively.
Also, we observe that the summation of all elements in $\mathbf{H}^{xx}$ in \eqref{eq:covariance_matrix_xx} is 
\begin{equation}
\begin{aligned}
\sum_{i=1}^{K}\sum_{j=1}^{K}H^{xx}_{i,j} = \sum_{i=1}^{K}p_i - \left(\sum_{i=1}^{K}p_i\right)^2 \ ,
\end{aligned}
\end{equation}
which can be relaxed as a convex constraint as 
\begin{equation}\label{eq:const:covariance_matrix_xx_sum}
\begin{aligned}
\sum_{i=1}^{K}\sum_{j=1}^{K}H^{xx}_{i,j} \leq \sum_{i=1}^{K}p_i - \left(\sum_{i=1}^{K}p_i\right)^2 \ .
\end{aligned}
\tag{C8}
\end{equation}
The positive semidefiniteness of covariance matrices is the constraint on all elements in $\mathbf{H}$ as
\begin{equation}\label{eq:const:covariance_matrix_sdp}
\begin{aligned}
\mathbf{H} \succeq	0 \ .
\end{aligned}
\tag{C9}
\end{equation}
We note that \eqref{eq:const:rate_lower_bound}, \eqref{eq:const:rate_upper_bound}, \dots, \eqref{eq:const:covariance_matrix_sdp} are all convex constraints.

Finally, the collection of \eqref{eq:const:rate_lower_bound}-\eqref{eq:const:covariance_matrix_sdp} defines a convex set $\mathcal{E}$ that contains all possible values of the tuple $(\mathbf{r},\mathbf{p},\mathbf{y},\mathbf{H})$ as
\begin{equation}\label{eq:set_E}
\begin{aligned}
\mathcal{E}\triangleq \{(\mathbf{r},\mathbf{p},\mathbf{y},\mathbf{H})|
\eqref{eq:const:rate_lower_bound} \text{-}
\eqref{eq:const:covariance_matrix_sdp}
\}\ ,
\end{aligned}
\end{equation}
where tuples' elements are defined as $\mathbf{r} = \{r_1,\dots,r_K\}$, $\mathbf{p} = \{p_1,\dots,p_K\}$, $\mathbf{y} = \{y_1,\dots,y_K\}$ and $\mathbf{H}$ in \eqref{eq:covariance_matrix}.
Also, we define the set of all possible values of $\mathbf{r}$ in all tuples of $\mathcal{E}$ as
\begin{equation}\label{eq:set_G}
\begin{aligned}
\mathcal{G}\triangleq \{\mathbf{r}|\mathbf{r}=\proj_1[(\mathbf{r},\mathbf{p},\mathbf{y},\mathbf{H})], \forall (\mathbf{r},\mathbf{p},\mathbf{y},\mathbf{H}) \in\mathcal{E}\}\ ,
\end{aligned}
\end{equation}
which can be interpreted as a projection of tuples in $\mathcal{E}$ at their coordinates of $\mathbf{r}$. Note that \eqref{eq:const:rate_lower_bound}-\eqref{eq:const:covariance_matrix_sdp} are all convex, which implies the set $\mathcal{E}$ in \eqref{eq:set_E} is convex and further implies the projection of $\mathcal{E}$, $\mathcal{G}$ defined in \eqref{eq:set_G}, is also convex \cite{boyd2004convex}.

\subsection{The Proposed Rate Estimation for MWSs}\label{subsec:cvx_rate_estimation}
By using $\mathcal{G}$ as an approximation of $\mathcal{F}$, we can then rewrite the rate estimation problem in \eqref{eq:corollary:mws_as_solution_of_weighting_problem} that  estimates the average rates of a MWS with weights $\mathbf{w}$ as
\begin{equation}\label{eq:prob:estimation_mws_r_g}
\begin{aligned}
\mathbf{r}^{\sim\mu(\cdot|\mathbf{w})}\approx\mathbf{r}^{\sim\mu(\cdot|\mathbf{w})}_\mathcal{G}\triangleq\arg\max_{\mathbf{r}} \langle\mathbf{w},\mathbf{r}\rangle \ , \ \textrm{s.t.} \ \mathbf{r} \in \mathcal{G} \ ,
\end{aligned}
\end{equation}
where $\mathbf{r}^{\sim\mu(\cdot|\mathbf{w})}_\mathcal{G}$ is the estimated value of $\mathbf{r}^{\sim\mu(\cdot|\mathbf{w})}$ in \eqref{eq:defi:average_mws_rate}.
As $\mathcal{G}$ can be represented by \eqref{eq:const:rate_lower_bound}-\eqref{eq:const:covariance_matrix_sdp} according to \eqref{eq:set_E} and \eqref{eq:set_G}, we can further rewrite the problem in the RHS of \eqref{eq:prob:estimation_mws_r_g} as
\begin{equation}\label{eq:prob:estimation_mws_r}
\begin{aligned}
\max_{\mathbf{r},\mathbf{p},\mathbf{y},\mathbf{H}}\ \langle\mathbf{w},\mathbf{r}\rangle \ , \  
\textrm{s.t.} \ \eqref{eq:const:rate_lower_bound} \text{-}
\eqref{eq:const:covariance_matrix_sdp} \ .
\end{aligned}
\end{equation}
Note that the optimal $\mathbf{r}$ from  \eqref{eq:prob:estimation_mws_r} is equal to the optimal $\mathbf{r}$ from the RHS of \eqref{eq:prob:estimation_mws_r_g}, i.e., $\mathbf{r}^{\sim\mu(\cdot|\mathbf{w})}_\mathcal{G}$ in \eqref{eq:prob:estimation_mws_r_g}.
Since \eqref{eq:const:rate_lower_bound}-\eqref{eq:const:covariance_matrix_sdp} are all convex constraints and the objective function in \eqref{eq:prob:estimation_mws_r}, $\langle\mathbf{w},\mathbf{r}\rangle$, is affine, the optimization problem in \eqref{eq:prob:estimation_mws_r} is a convex optimization problem \cite{boyd2004convex}.
Note that $\mathcal{G}$ must be a bounded set so that the estimated rates in \eqref{eq:prob:estimation_mws_r_g} or \eqref{eq:prob:estimation_mws_r} have meaningful values, or otherwise, they will be infinity and meaningless. We prove the boundedness of $\mathcal{G}$ in the appendix. We refer to $\mathcal{G}$ as the bounding set of $\mathcal{F}$.

\section{Proposed MVWO for MWS Design}\label{sec:dwo_formulation}
By replacing \eqref{eq:defi:average_mws_rate} with \eqref{eq:prob:estimation_mws_r} as the constraint in \eqref{eq:prob:stochastic_weight_optimization_mu_w}, we can then rewrite the optimization of the utility function in \eqref{eq:prob:stochastic_weight_optimization_mu_w} as 
\begin{equation}\label{eq:prob:deterministic_weight_optimization_G}
\begin{aligned}
\max_{\mathbf{w},\mathbf{r},\mathbf{p},\mathbf{y},\mathbf{H}} \   f(\mathbf{r})  ,\
\textrm{s.t.}                   &\ \eqref{eq:weight_non_negative_normalized} \ , &\\
                                &\ \mathbf{r},\mathbf{p},\mathbf{y},\mathbf{H} = \arg\max_{\mathbf{r}',\mathbf{p}',\mathbf{y}',\mathbf{H}'}\ \langle\mathbf{w},\mathbf{r}'\rangle, \\ 
                                &\qquad\qquad\qquad\qquad\quad\textrm{s.t.}\ \eqref{eq:const:rate_lower_bound} \text{-}
\eqref{eq:const:covariance_matrix_sdp} .
\end{aligned}
\tag{P2}
\end{equation}
We use the optimal $\mathbf{w}$ of the above problem to approximate the optimal MWS weights.
Note that \eqref{eq:prob:estimation_mws_r} is embedded in \eqref{eq:prob:deterministic_weight_optimization_G}, resulting in that \eqref{eq:prob:deterministic_weight_optimization_G} has a standard form of BLOPs with known iterative solvers \cite{colson2007overview}.
We refer to the embedded \eqref{eq:prob:estimation_mws_r} as the lower-level problem (LLP) of  \eqref{eq:prob:deterministic_weight_optimization_G} and  \eqref{eq:prob:deterministic_weight_optimization_G} as the upper-level problem (ULP) of the embedded \eqref{eq:prob:estimation_mws_r}, respectively. We refer to \eqref{eq:prob:deterministic_weight_optimization_G} as the MVWO since its formulation only uses the mean and variance of users' SNRs.

\subsection{Issue of Existing Iterative Solvers for the BLOP}\label{subsec:issues_of_existing_iterative_methods}
We study how to solve \eqref{eq:prob:deterministic_weight_optimization_G} using iterative solvers.
We first briefly explain the process of iterative solvers for BLOPs~\cite{colson2007overview,sinha2017review}. 
Let $\mathbf{w}^{(i)}$, $i=1,2,\dots$, denote the weights in the $i$-th iteration of an iterative solver.
In each iteration, the LLP is first solved when the weights in its objective $\langle\mathbf{w},\mathbf{r}'\rangle$ is $\mathbf{w}^{(i)}$. 
Next, the process calculates the weights in the next iteration, $\mathbf{w}^{(i+1)}$.
In the existing iterative solvers \cite{colson2007overview}, the calculation of $\mathbf{w}^{(i+1)}$ is done by solving an additional optimization problem formulated based on the Lagrangian of the LLP, the utility function and the solution of the LLP (i.e., the value of $\mathbf{r}$, $\mathbf{p}$, $\mathbf{y}$ and $\mathbf{H}$ when weights in the LLP is $\mathbf{w}^{(i)}$) in the $i$-th iteration. This optimization problem is typically solved by branch and bound algorithms \cite{colson2007overview,bard1990branch} whose complexity is at least polynomial in terms of the problem size (e.g., the number of users in our case) \cite{zhang1996branch}.
Such methods have a high computational complexity. 
Thus, a low-complexity method is required to update the weights.

\section{Proposed Iterative Solver for MVWO}\label{sec:proposed_iterative_solver}
In this section, we design an iterative solver to solve the MVWO problem in \eqref{eq:prob:deterministic_weight_optimization_G}.
We first define the optimal rates $\mathbf{r}^*$ in the bounding set $\mathcal{G}$ that maximize the objective of \eqref{eq:prob:deterministic_weight_optimization_G}, $f(\mathbf{r})$, as
\begin{equation}\label{eq:defi:r_g_*_g}
\begin{aligned}
\mathbf{r}^*\triangleq\arg\max_{\mathbf{r}'} f(\mathbf{r}') \ , \ \textrm{s.t.} \ \mathbf{r}' \in \mathcal{G} \ .
\end{aligned}
\end{equation}
Based on the above definition, the optimal weights that maximize $f(\mathbf{r})$ in \eqref{eq:prob:deterministic_weight_optimization_G} have the following property.
\begin{cor}\label{cor:optimal_condition_w}
        Suppose $\mathbf{r}^{\sim\mu(\cdot|\mathbf{w})}_\mathcal{G}$ are the estimated rates for given weights $\mathbf{w}$ when solving the LLP in  \eqref{eq:prob:deterministic_weight_optimization_G}. if $|\langle\mathbf{w},\mathbf{r}^{\sim\mu(\cdot|\mathbf{w})}_\mathcal{G}-\mathbf{r}^*\rangle|=0$, then $\mathbf{w}$ are the optimal weights that maximize $f(\mathbf{r})$ in \eqref{eq:prob:deterministic_weight_optimization_G}. In other words, $|\langle\mathbf{w},\mathbf{r}^{\sim\mu(\cdot|\mathbf{w})}_\mathcal{G}-\mathbf{r}^*\rangle|=0$ is the sufficient condition for the optimality of $\mathbf{w}$.
\end{cor}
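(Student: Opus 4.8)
The plan is to show that the hypothesis forces the upper-level objective evaluated at $\mathbf{w}$ to attain its largest attainable value, which I will identify as $f(\mathbf{r}^*)$. First I would establish a universal upper bound on \eqref{eq:prob:deterministic_weight_optimization_G}. For any feasible weights $\mathbf{w}'$ obeying \eqref{eq:weight_non_negative_normalized}, the lower-level solution $\mathbf{r}^{\sim\mu(\cdot|\mathbf{w}')}_\mathcal{G}$ lies in $\mathcal{G}$ by \eqref{eq:prob:estimation_mws_r_g}; since $\mathbf{r}^*$ maximizes $f$ over $\mathcal{G}$ by \eqref{eq:defi:r_g_*_g}, this gives $f(\mathbf{r}^{\sim\mu(\cdot|\mathbf{w}')}_\mathcal{G})\leq f(\mathbf{r}^*)$ for every feasible $\mathbf{w}'$. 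Hence $f(\mathbf{r}^*)$ is the optimal value of the ULP, and it suffices to prove that the given $\mathbf{w}$ meets it, i.e. $f(\mathbf{r}^{\sim\mu(\cdot|\mathbf{w})}_\mathcal{G})=f(\mathbf{r}^*)$.

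Next I would eliminate the absolute value in the hypothesis. Because $\mathbf{r}^{\sim\mu(\cdot|\mathbf{w})}_\mathcal{G}$ maximizes the linear objective $\langle\mathbf{w},\mathbf{r}\rangle$ over $\mathcal{G}$ and $\mathbf{r}^*\in\mathcal{G}$, we always have $\langle\mathbf{w},\mathbf{r}^{\sim\mu(\cdot|\mathbf{w})}_\mathcal{G}\rangle\geq\langle\mathbf{w},\mathbf{r}^*\rangle$, so $\langle\mathbf{w},\mathbf{r}^{\sim\mu(\cdot|\mathbf{w})}_\mathcal{G}-\mathbf{r}^*\rangle\geq 0$ and the sign is fixed. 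Thus the hypothesis $|\langle\mathbf{w},\mathbf{r}^{\sim\mu(\cdot|\mathbf{w})}_\mathcal{G}-\mathbf{r}^*\rangle|=0$ is equivalent to $\langle\mathbf{w},\mathbf{r}^{\sim\mu(\cdot|\mathbf{w})}_\mathcal{G}\rangle=\langle\mathbf{w},\mathbf{r}^*\rangle$. This states that $\mathbf{r}^*$ attains the same maximal linear value as the lower-level solution, i.e. $\mathbf{r}^*$ is itself a maximizer of $\langle\mathbf{w},\cdot\rangle$ over $\mathcal{G}$ and lies on the supporting hyperplane of $\mathcal{G}$ with normal $\mathbf{w}$.

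Finally I would conclude the objective equality. Both $\mathbf{r}^{\sim\mu(\cdot|\mathbf{w})}_\mathcal{G}$ and $\mathbf{r}^*$ maximize $\langle\mathbf{w},\cdot\rangle$ over $\mathcal{G}$, so invoking uniqueness of this maximizer (the MWS attains a single rate vector, consistent with Corollary \ref{corollary:mws_as_solution_of_weighting_problem}, and the strictly concave $f$ has $\mathbf{r}^*$ as its unique maximizer over the convex $\mathcal{G}$) yields $\mathbf{r}^{\sim\mu(\cdot|\mathbf{w})}_\mathcal{G}=\mathbf{r}^*$, hence $f(\mathbf{r}^{\sim\mu(\cdot|\mathbf{w})}_\mathcal{G})=f(\mathbf{r}^*)$, so $\mathbf{w}$ reaches the ULP optimum and is therefore optimal. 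I expect this last step to be the main obstacle: equal linear-objective values do not by themselves force $\mathbf{r}^{\sim\mu(\cdot|\mathbf{w})}_\mathcal{G}=\mathbf{r}^*$ if $\mathcal{G}$ possesses a flat face normal to $\mathbf{w}$ on which the linear program could return a point other than $\mathbf{r}^*$. I would close this gap either by arguing uniqueness of the linear maximizer over $\mathcal{G}$, or, more robustly, by a first-order tangency argument that relates $\mathbf{w}$ to $\nabla f(\mathbf{r}^*)=[1/r^*_1,\dots,1/r^*_K]^{\rm T}>0$: combining the optimality inequality $\langle\nabla f(\mathbf{r}^*),\mathbf{r}-\mathbf{r}^*\rangle\leq 0$ for all $\mathbf{r}\in\mathcal{G}$ with the concavity bound $f(\mathbf{r}^*)\leq f(\mathbf{r}^{\sim\mu(\cdot|\mathbf{w})}_\mathcal{G})+\langle\nabla f(\mathbf{r}^{\sim\mu(\cdot|\mathbf{w})}_\mathcal{G}),\mathbf{r}^*-\mathbf{r}^{\sim\mu(\cdot|\mathbf{w})}_\mathcal{G}\rangle$ and the vanishing inner-product term to force the reverse inequality $f(\mathbf{r}^{\sim\mu(\cdot|\mathbf{w})}_\mathcal{G})\geq f(\mathbf{r}^*)$, which together with the upper bound from the first paragraph gives equality.
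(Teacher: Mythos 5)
Your first two steps coincide with the paper's own argument: the universal bound $f(\mathbf{r}^{\sim\mu(\cdot|\mathbf{w}')}_\mathcal{G})\leq f(\mathbf{r}^*)$ for every feasible $\mathbf{w}'$, and the sign argument showing that the hypothesis forces $\langle\mathbf{w},\mathbf{r}^*\rangle=\max_{\mathbf{r}'\in\mathcal{G}}\langle\mathbf{w},\mathbf{r}'\rangle$, i.e.\ that $\mathbf{r}^*$ is itself a maximizer of the LLP objective. The gap is in your final step, and neither of your proposed fixes closes it. The uniqueness claims do not help: strict concavity of $f$ makes $\mathbf{r}^*$ the unique maximizer of $f$ over $\mathcal{G}$, but says nothing about uniqueness of the maximizer of the \emph{linear} functional $\langle\mathbf{w},\cdot\rangle$ over $\mathcal{G}$, which can have a flat optimal face --- exactly the scenario you flag --- and Corollary \ref{corollary:mws_as_solution_of_weighting_problem} concerns the physical region $\mathcal{F}$, not the LLP over $\mathcal{G}$. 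The first-order tangency argument also fails as written: to extract $f(\mathbf{r}^{\sim\mu(\cdot|\mathbf{w})}_\mathcal{G})\geq f(\mathbf{r}^*)$ from the concavity bound $f(\mathbf{r}^*)\leq f(\mathbf{r}^{\sim\mu(\cdot|\mathbf{w})}_\mathcal{G})+\langle\nabla f(\mathbf{r}^{\sim\mu(\cdot|\mathbf{w})}_\mathcal{G}),\mathbf{r}^*-\mathbf{r}^{\sim\mu(\cdot|\mathbf{w})}_\mathcal{G}\rangle$ you would need $\langle\nabla f(\mathbf{r}^{\sim\mu(\cdot|\mathbf{w})}_\mathcal{G}),\mathbf{r}^*-\mathbf{r}^{\sim\mu(\cdot|\mathbf{w})}_\mathcal{G}\rangle\leq0$, whereas the hypothesis only gives $\langle\mathbf{w},\mathbf{r}^*-\mathbf{r}^{\sim\mu(\cdot|\mathbf{w})}_\mathcal{G}\rangle=0$; no proportionality between $\mathbf{w}$ and $\nabla f$ at that point is available, so the vanishing inner product cannot be transferred to the gradient term.

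The repair is to drop the attempt to show $\mathbf{r}^{\sim\mu(\cdot|\mathbf{w})}_\mathcal{G}=\mathbf{r}^*$ altogether. Problem \eqref{eq:prob:deterministic_weight_optimization_G} is a joint (optimistic) maximization over $(\mathbf{w},\mathbf{r},\mathbf{p},\mathbf{y},\mathbf{H})$: any tuple in the argmax set of the LLP may be paired with $\mathbf{w}$ as a feasible point of the upper level. Since the LLP objective depends only on $\mathbf{r}'$ and your second step shows $\mathbf{r}^*$ attains its maximum over $\mathcal{G}$, the tuple $(\mathbf{r}^*,\mathbf{p}^*,\mathbf{y}^*,\mathbf{H}^*)$ from \eqref{eq:defi:r_g_*} lies in that argmax set; hence $(\mathbf{w},\mathbf{r}^*,\mathbf{p}^*,\mathbf{y}^*,\mathbf{H}^*)$ is feasible for \eqref{eq:prob:deterministic_weight_optimization_G} and attains the value $f(\mathbf{r}^*)$, which your first paragraph already identified as the global upper bound. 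Therefore $\mathbf{w}$ is optimal. This is precisely the (brief) route the paper's proof takes.
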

\begin{proof}
Note that $\mathbf{r}^{\sim\mu(\cdot|\mathbf{w})}_\mathcal{G}$ in the above are the optimal rates that maximize the objective $\langle\mathbf{w},\mathbf{r}\rangle$ in the LLP of  \eqref{eq:prob:deterministic_weight_optimization_G}. Thus, $|\langle\mathbf{w},\mathbf{r}-\mathbf{r}^*\rangle|=0$ implies that $\mathbf{r}^*$ also maximize $\langle\mathbf{w},\mathbf{r}\rangle$ in the LLP of \eqref{eq:prob:deterministic_weight_optimization_G}, i.e., $\mathbf{r}^*=\arg\max_{\mathbf{r}'\in\mathcal{G}} \langle\mathbf{w},\mathbf{r}'\rangle$. Also, according to the definition of $\mathbf{r}^*$ in \eqref{eq:defi:r_g_*_g}, no other rates $\mathbf{r}$ can achieve a higher objective value, i.e., $f(\mathbf{r})\leq f(\mathbf{r}^*)$, $\forall \mathbf{r}\in\mathcal{G}$. This implies $\mathbf{w}$ are the optimal weights as we cannot further optimize weights to achieve better rates (with a higher objective value $f(\mathbf{r})$) in $\mathcal{G}$.
\end{proof}
Using the statement, we design the iteration of weights $\mathbf{w}^{(i)}$, $i=1,2,\dots$, to minimize $|\langle\mathbf{w}^{(i)},\mathbf{r}^{(i)}-\mathbf{r}^*\rangle|$. 
We write estimated rates in the $i$-th iteration as $\mathbf{r}^{(i)} \triangleq \mathbf{r}^{\sim\mu(\cdot|\mathbf{w}^{(i)})}_\mathcal{G}$ for simplicity.
In the remaining of this section, Section {\ref{sec:proposed_iterative_solver}-A} computes $\mathbf{r}^*$ in the optimality condition, e.g., $|\langle\mathbf{w}^{(i)},\mathbf{r}^{(i)}-\mathbf{r}^*\rangle|=0$, and initializes the first iteration's weights $\mathbf{w}^{(1)}$. Section \ref{sec:proposed_iterative_solver}-B computes $\mathbf{r}^{(i)}$ in $|\langle\mathbf{w}^{(i)},\mathbf{r}^{(i)}-\mathbf{r}^*\rangle|$ and the next iteration's weights $\mathbf{w}^{(i+1)}$ in each iteration. Section \ref{sec:proposed_iterative_solver}-C checks the optimality condition, i.e., whether $|\langle\mathbf{w}^{(i)},\mathbf{r}^{(i)}-\mathbf{r}^*\rangle|$ is close enough to $0$ in each iteration.

\subsection{Initialization of the Proposed Iterative Process}
The designed solver first finds the optimal rates $\mathbf{r}^*$ in the bounding set $\mathcal{G}$ that maximize the objective $f(\mathbf{r})$ of \eqref{eq:prob:deterministic_weight_optimization_G} by solving the following problem, 
\begin{equation}\label{eq:defi:r_g_*}
\begin{aligned}
\mathbf{r}^*,\mathbf{p}^*,\mathbf{y}^*,\mathbf{H}^* = \arg\max_{\mathbf{r}',\mathbf{p}',\mathbf{y}',\mathbf{H}'}\ f(\mathbf{r}') \ , \ \textrm{s.t.}\ \eqref{eq:const:rate_lower_bound} \text{-}
\eqref{eq:const:covariance_matrix_sdp} \ .
\end{aligned}
\end{equation}
Here, \eqref{eq:const:rate_lower_bound}-\eqref{eq:const:covariance_matrix_sdp} in \eqref{eq:defi:r_g_*} are used to represent $\mathcal{G}$ in \eqref{eq:defi:r_g_*_g}, as defined in \eqref{eq:set_E} and \eqref{eq:set_G}.
Also, \eqref{eq:defi:r_g_*} is a convex optimization problem because its constraints are convex, and its objective function is to maximize a concave function $f(\cdot)$. Note that \eqref{eq:defi:r_g_*} is only solved once before the following iterative process starts. Then,  we initialize the weights in the first iteration, $\mathbf{w}^{(1)}$, as
\begin{equation}\label{eq:weights_first_iteration}
\begin{aligned}
\mathbf{w}^{(1)} \triangleq [\frac{1}{\sqrt{K}},\dots,\frac{1}{\sqrt{K}}]^{\rm T} \ .
\end{aligned}
\end{equation}

\subsection{Low-Complexity Iterative Updates of Weights}\label{subsec:step_calculation}
Next, the LLP for the given weights in the $i$-th iteration, $\mathbf{w}^{(i)}$, is solved as
\begin{equation}\label{eq:lower_lvl_per_iteration}
\begin{aligned}
\mathbf{r}^{(i)},\mathbf{p}^{(i)},\mathbf{y}^{(i)},\mathbf{H}^{(i)} = \arg\max_{\mathbf{r}',\mathbf{p}',\mathbf{y}',\mathbf{H}'}\ \langle\mathbf{w}^{(i)},\mathbf{r}'\rangle\ \textrm{s.t.}\ \eqref{eq:const:rate_lower_bound} \text{-}\eqref{eq:const:covariance_matrix_sdp} .
\end{aligned}
\end{equation}
We construct $\mathbf{u}^{(i)}$ as the weights and normalize $\mathbf{u}^{(i)}$ to get the next iteration's weights $\mathbf{w}^{(i+1)}$, ensuring the normalization requirement in \eqref{eq:weight_non_negative_normalized}. 
$\mathbf{u}^{(i)}$ consists of two parts.
The first part is $a^{(i)}\mathbf{w}^{(i)}+(\mathbf{r}^*-\mathbf{r}^{(i)})$, which is designed to reduce $|\langle\mathbf{w}^{(i+1)},\mathbf{r}^{(i+1)}-\mathbf{r}^*\rangle|$ in the next iteration. Specifically, we want to find the next iteration's weights satisfying $|\langle\mathbf{w}^{(i+1)},\mathbf{r}^{(i+1)}-\mathbf{r}^*\rangle| = 0$, i.e., satisfying the optimality condition. However, $\mathbf{r}^{(i+1)}$ is only computed in the next iteration $i+1$ and is unknown in the current iteration $i$.
Thus, we assume that $\mathbf{r}^{(i)}$ is around $\mathbf{r}^{(i+1)}$ and set
$\mathbf{w}^{(i+1)}$ such that $|\langle\mathbf{w}^{(i+1)},\mathbf{r}^{(i)}-\mathbf{r}^*\rangle|\approx 0$. This is done by setting $a^{(i)}$ to $\|\mathbf{r}^*-\mathbf{r}^{(i)}\|^2_2/\langle\mathbf{w}^{(i)},\mathbf{r}^{(i)}-\mathbf{r}^*\rangle$, as such the inner product between the first part and $(\mathbf{r}^{(i)}-\mathbf{r}^*)$ is $0$, approximating the optimality condition. 
The second part is $b^{(i)}\mathbf{w}^{(i)}$ ensuring that $\mathbf{u}^{(i)}$ is a positive vector (as we only allow the MWS weights to be positive in \eqref{eq:weight_non_negative_normalized}). We achieve this by setting $b^{(i)}$ as the minimum ratio between $(r^*_k-r^{(i)}_k)$ and $w^{(i)}_k$, $k=1,\dots,K$. Consequently, $(\mathbf{r}^*-\mathbf{r}^{(i)}) + b^{(i)}\mathbf{w}^{(i)}$ is a positive vector and so is $a^{(i)}\mathbf{w}^{(i)}+ (\mathbf{r}^*-\mathbf{r}^{(i)}) + b^{(i)}\mathbf{w}^{(i)}$ (as proved later in Corollary \ref{corollary:feasibility_w_per_iteration}). In summary, $\mathbf{u}^{(i)}$ and $\mathbf{w}^{(i+1)}$ are constructed as
\begin{equation}\label{eq:u_per_iteration_definition}
\begin{aligned}
\mathbf{u}^{(i)} \triangleq \underbrace{a^{(i)}\mathbf{w}^{(i)}+(\mathbf{r}^*-\mathbf{r}^{(i)})}_{\text{first part}}+\underbrace{b^{(i)}\mathbf{w}^{(i)}}_{\text{second part}}\ ; \ \mathbf{w}^{(i+1)} =  \frac{\mathbf{u}^{(i)}}{\|\mathbf{u}^{(i)}\|_2} .
\end{aligned}
\end{equation}
In \eqref{eq:u_per_iteration_definition}, $\mathbf{r}^{*}$ and $\mathbf{r}^{(i)}$ are from \eqref{eq:defi:r_g_*} and \eqref{eq:lower_lvl_per_iteration}, respectively, and 
$a^{(i)}$ and $b^{(i)}$ are configured as
\begin{equation}\label{eq:a_per_iteration_definition}
\begin{aligned}
a^{(i)} \triangleq \frac{\|\mathbf{r}^*-\mathbf{r}^{(i)}\|^2_2}{\langle\mathbf{w}^{(i)},\mathbf{r}^{(i)}-\mathbf{r}^*\rangle} , \ b^{(i)}\triangleq -\min\{(\mathbf{r}^*-\mathbf{r}^{(i)})\oslash\mathbf{w}^{(i)}\}  ,
\end{aligned}
\end{equation}
as mentioned before.
The update of weights in \eqref{eq:u_per_iteration_definition} has a linear computational complexity that is lower than the polynomial complexity of solving the additional optimization problem in existing methods \cite{colson2007overview}.
Furthermore, note that the weights in each iteration follow
\begin{cor}\label{corollary:feasibility_w_per_iteration}
If $\mathbf{w}^{(i)}>0$ and $\|\mathbf{w}^{(i)}\|_2=1$, then $\mathbf{w}^{(i+1)}>0$ and $\|\mathbf{w}^{(i+1)}\|_2=1$.
\end{cor}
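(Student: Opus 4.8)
The plan is to prove the two conclusions separately, treating the norm equality as essentially immediate and spending the real effort on strict positivity. By the construction in \eqref{eq:u_per_iteration_definition}, $\mathbf{w}^{(i+1)}=\mathbf{u}^{(i)}/\|\mathbf{u}^{(i)}\|_2$, so $\|\mathbf{w}^{(i+1)}\|_2=1$ holds automatically provided $\mathbf{u}^{(i)}\neq\mathbf{0}$; and $\mathbf{w}^{(i+1)}>0$ follows once $\mathbf{u}^{(i)}>0$, since dividing a strictly positive vector by the positive scalar $\|\mathbf{u}^{(i)}\|_2$ preserves strict positivity. Hence the whole statement reduces to showing $\mathbf{u}^{(i)}>0$.

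To that end I would group the construction as $\mathbf{u}^{(i)}=a^{(i)}\mathbf{w}^{(i)}+\big[(\mathbf{r}^*-\mathbf{r}^{(i)})+b^{(i)}\mathbf{w}^{(i)}\big]$ and bound the two pieces. For the bracketed vector, its $k$-th entry is $(r^*_k-r^{(i)}_k)+b^{(i)}w^{(i)}_k$; the choice $b^{(i)}=-\min_k\{(r^*_k-r^{(i)}_k)/w^{(i)}_k\}$ in \eqref{eq:a_per_iteration_definition} gives $b^{(i)}\geq -(r^*_k-r^{(i)}_k)/w^{(i)}_k$ for every $k$, and multiplying through by $w^{(i)}_k>0$ (from the hypothesis $\mathbf{w}^{(i)}>0$) yields $(r^*_k-r^{(i)}_k)+b^{(i)}w^{(i)}_k\geq 0$, so the bracketed vector is element-wise nonnegative. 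For the remaining term I would show $a^{(i)}>0$, whence $a^{(i)}\mathbf{w}^{(i)}>0$ strictly; adding a strictly positive vector to a nonnegative one then gives $\mathbf{u}^{(i)}>0$, which simultaneously establishes $\mathbf{u}^{(i)}\neq\mathbf{0}$ and closes the norm argument.

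The crux is therefore the sign of $a^{(i)}$, and this is where the optimization structure must be used. The numerator $\|\mathbf{r}^*-\mathbf{r}^{(i)}\|_2^2$ is nonnegative; for the denominator, since $\mathbf{r}^{(i)}$ solves the LLP \eqref{eq:lower_lvl_per_iteration}, i.e.\ maximizes $\langle\mathbf{w}^{(i)},\cdot\rangle$ over $\mathcal{G}$, and $\mathbf{r}^*\in\mathcal{G}$ by \eqref{eq:defi:r_g_*_g}, we have $\langle\mathbf{w}^{(i)},\mathbf{r}^{(i)}\rangle\geq\langle\mathbf{w}^{(i)},\mathbf{r}^*\rangle$, so $\langle\mathbf{w}^{(i)},\mathbf{r}^{(i)}-\mathbf{r}^*\rangle\geq 0$. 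The delicate point is to rule out the degenerate case in which the numerator or denominator vanishes, which would make $a^{(i)}$ ill-defined: the update producing $\mathbf{w}^{(i+1)}$ is only executed when the optimality condition of Corollary \ref{cor:optimal_condition_w}, namely $|\langle\mathbf{w}^{(i)},\mathbf{r}^{(i)}-\mathbf{r}^*\rangle|=0$, fails, so the denominator is strictly positive; and a strictly positive denominator forces $\mathbf{r}^{(i)}\neq\mathbf{r}^*$, hence a strictly positive numerator. Thus $a^{(i)}>0$, completing the argument. I expect this last step, tying the well-definedness and positivity of $a^{(i)}$ to the non-termination of the iteration and to the LLP optimality of $\mathbf{r}^{(i)}$, to be the only real obstacle; the $b^{(i)}$ bound and the normalization are routine.
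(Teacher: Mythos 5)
Your proposal is correct and follows essentially the same route as the paper's proof: decompose $\mathbf{u}^{(i)}$, show $(\mathbf{r}^*-\mathbf{r}^{(i)})+b^{(i)}\mathbf{w}^{(i)}\geq 0$ element-wise from the definition of $b^{(i)}$, establish $a^{(i)}>0$ from the LLP optimality of $\mathbf{r}^{(i)}$, and conclude by normalization. If anything, your treatment is slightly more careful than the paper's, since you derive the strictness of $\langle\mathbf{w}^{(i)},\mathbf{r}^{(i)}-\mathbf{r}^*\rangle>0$ from the non-termination of the algorithm rather than asserting it directly from optimality, and you avoid the paper's separate (and unnecessary) argument that $b^{(i)}>0$.
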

\begin{proof}
The proof is in the appendix.
\end{proof}
Since the weights in the first iteration, as defined in \eqref{eq:weights_first_iteration}, satisfy the constraints on weights in \eqref{eq:weight_non_negative_normalized}, i.e., $\mathbf{w}^{(1)}>0$ and $\|\mathbf{w}^{(1)}\|_2=1$, the weights in all following iterations satisfy those constraints based on Corollary \ref{corollary:feasibility_w_per_iteration}, i.e., all iterated weights are feasible in the above process.

\subsection{Termination of the Proposed Iterative Process}\label{subsec:termination}

The solver should stop if the iterated weights $\mathbf{w}^{(i)}$ satisfy the optimality condition in Corollary \ref{cor:optimal_condition_w}, i.e., $|\langle\mathbf{w}^{(i)},\mathbf{r}^{(i)}-\mathbf{r}^*\rangle| = 0$.
In practice, if $|\langle\mathbf{w}^{(i)},\mathbf{r}^{(i)}-\mathbf{r}^*\rangle|$ is less than a small positive number $\hat{\epsilon}$, then we stop the iteration and return $\mathbf{w}^{(i)}$ as the optimal weights of \eqref{eq:prob:deterministic_weight_optimization_G}.
We refer to $\hat{\epsilon}$ as the convergence error of the solver, which describes how close $|\langle\mathbf{w}^{(i)},\mathbf{r}^{(i)}-\mathbf{r}^*\rangle|$ is to $0$.
Algorithm \ref{alg:DWO_algorithm} summarizes the process of the iterative solver. The analysis of the convergence of Algorithm \ref{alg:DWO_algorithm} is presented in the following section.
\begin{spacing}{0.8}
\begin{algorithm}[!t]
\caption{Proposed Iterative Solver for MVWO}\label{alg:DWO_algorithm}
\begin{algorithmic}[1]
\STATE Find $\mathbf{r}^*$ via solving \eqref{eq:defi:r_g_*}.
\STATE Initialize $\mathbf{w}^{(1)}$ as \eqref{eq:weights_first_iteration}.
\FOR{$i=1,2,\dots$}
\STATE Construct Problem \eqref{eq:lower_lvl_per_iteration} based on $\mathbf{w}^{(i)}$.
\STATE Find the solution of Problem \eqref{eq:lower_lvl_per_iteration} as  $\mathbf{r}^{(i)}$.
\IF{$|\langle\mathbf{w}^{(i)},\mathbf{r}^{(i)}-\mathbf{r}^*\rangle|<\hat{\epsilon}$}
\STATE \textbf{break and terminate}.
\ELSE
\STATE Calculate $a^{(i)}$ and $b^{(i)}$ as \eqref{eq:a_per_iteration_definition}.
\STATE Calculate $\mathbf{u}^{(i)}$ and $\mathbf{w}^{(i+1)}$ as \eqref{eq:u_per_iteration_definition}.
\ENDIF
\ENDFOR
\STATE \textbf{return} $\mathbf{w}^{(i)}$.
\end{algorithmic}
\end{algorithm}
\end{spacing}

\section{Convergence of Proposed Iterative Solver}\label{sec:convergence_of_iterative_method}
In this section, we prove that Algorithm \ref{alg:DWO_algorithm} converges. Also, we study the computational complexity of the proposed solver. 
\subsection{Convergence of Algorithm \ref{alg:DWO_algorithm}}\label{subsec:convergence_of_alg}
We define the convergence of Algorithm \ref{alg:DWO_algorithm} as that $|\langle\mathbf{w}^{(i)},\mathbf{r}^{(i)}-\mathbf{r}^*\rangle|$ converges to $0$, i.e., for any positive number $\hat{\epsilon}$, $|\langle\mathbf{w}^{(i)},\mathbf{r}^{(i)}-\mathbf{r}^*\rangle|$ is less than $\hat{\epsilon}$ after a number of iterations.
Note that the above convergence defined for Algorithm \ref{alg:DWO_algorithm} also implies that $\mathbf{w}^{(i)}$ converges to the optimal weights that maximize $f(\mathbf{r})$ in \eqref{eq:prob:deterministic_weight_optimization_G}, as discussed in Corollary \ref{cor:optimal_condition_w}.
To prove the convergence of Algorithm \ref{alg:DWO_algorithm}, we study the relationship between the iterated weights $\mathbf{w}^{(i)}$ and the optimal weights of \eqref{eq:prob:deterministic_weight_optimization_G}. 
We collect all optimal weights of \eqref{eq:prob:deterministic_weight_optimization_G} as a set $\mathcal{W}$, i.e., 
\begin{equation}
\begin{aligned}
\mathcal{W} \triangleq \{ \mathbf{w}\big|\ |\langle\mathbf{w},\mathbf{r}^{\sim\mu(\cdot|\mathbf{w})}_\mathcal{G}-\mathbf{r}^*\rangle|=0,\ \text{s.t.}\ \eqref{eq:weight_non_negative_normalized}\} \ ,
\end{aligned}
\end{equation}
and we write $\mathbf{w}^*$ as an arbitrary vector of weights from $\mathcal{W}$, i.e., $\mathbf{w}^*\in \mathcal{W}$ are optimal weights satisfying the optimality condition $|\langle\mathbf{w}^*,\mathbf{r}^{\sim\mu(\cdot|\mathbf{w}^*)}_\mathcal{G}-\mathbf{r}^*\rangle|=0$. The inner product between $\mathbf{w}^{(i)}$ and $\mathbf{w}^*$ has the following properties.
\begin{lemma}\label{lemma:ww_per_iteration}
1) If Algorithm \ref{alg:DWO_algorithm} is not terminating in the $i$-th iteration, then $\forall \mathbf{w}^* \in \mathcal{W}$,
\begin{equation}\label{eq:lemma:ww_monotonic}
\begin{aligned}
    \frac{\langle\mathbf{w}^{(i+1)},\mathbf{w}^*\rangle}{\langle\mathbf{w}^{(i)},\mathbf{w}^*\rangle}
    > \left[1 - \frac{(\langle\mathbf{w}^{(i)},\mathbf{r}^{(i)}-\mathbf{r}^*\rangle)^2}{2\hat{R}^2} \right]^{-\frac{1}{2}}>1   \ .
\end{aligned}
\end{equation}
Here, $\hat{R}$ in \eqref{eq:lemma:ww_monotonic} is a finite number representing the maximum Euclidean distance between any two vectors in $\mathcal{G}$, defined as $\hat{R}\triangleq\max_{\mathbf{r}^\textrm{a},\mathbf{r}^\textrm{b} \in \mathcal{G}} \|\mathbf{r}^\textrm{a} - \mathbf{r}^\textrm{b}\|_2$.
2) $\frac{1}{\sqrt{K}}\leq\langle\mathbf{w}^{(i)},\mathbf{w}^*\rangle\leq1$, $\forall i$ and $\forall \mathbf{w}^*\in \mathcal{W}$.
\end{lemma}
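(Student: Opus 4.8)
The plan is to analyze the two claims separately, starting with part 2) because it furnishes the bounds needed to make sense of part 1). For part 2), I would first observe that both $\mathbf{w}^{(i)}$ and $\mathbf{w}^*$ are unit vectors (by Corollary~\ref{corollary:feasibility_w_per_iteration} and the definition of $\mathcal{W}$), so the Cauchy--Schwarz inequality immediately gives $\langle\mathbf{w}^{(i)},\mathbf{w}^*\rangle\leq\|\mathbf{w}^{(i)}\|_2\|\mathbf{w}^*\|_2=1$. For the lower bound, I would use that both vectors are \emph{positive} unit vectors. The key elementary fact is that for two unit vectors with all nonnegative entries, their inner product is at least as large as the smallest entry appearing; more carefully, since $\mathbf{w}^{(1)}=[\tfrac{1}{\sqrt{K}},\dots,\tfrac{1}{\sqrt{K}}]^{\rm T}$ and part~1) will show $\langle\mathbf{w}^{(i)},\mathbf{w}^*\rangle$ is monotonically increasing in $i$, it suffices to establish $\langle\mathbf{w}^{(1)},\mathbf{w}^*\rangle\geq\tfrac{1}{\sqrt{K}}$. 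Because $\mathbf{w}^*>0$ and $\|\mathbf{w}^*\|_2=1$, we have $\langle\mathbf{w}^{(1)},\mathbf{w}^*\rangle=\tfrac{1}{\sqrt{K}}\sum_k w_k^*\geq\tfrac{1}{\sqrt{K}}(\sum_k (w_k^*)^2)^{1/2}=\tfrac{1}{\sqrt{K}}$, where the inequality uses that the $\ell_1$-norm dominates the $\ell_2$-norm for nonnegative vectors. Thus part~2) reduces to the monotonicity asserted in part~1).

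For part 1), the strategy is to compute $\langle\mathbf{w}^{(i+1)},\mathbf{w}^*\rangle$ directly from the update rule. Using $\mathbf{w}^{(i+1)}=\mathbf{u}^{(i)}/\|\mathbf{u}^{(i)}\|_2$ from \eqref{eq:u_per_iteration_definition}, I would write
\[
\langle\mathbf{w}^{(i+1)},\mathbf{w}^*\rangle
=\frac{\langle a^{(i)}\mathbf{w}^{(i)}+b^{(i)}\mathbf{w}^{(i)}+(\mathbf{r}^*-\mathbf{r}^{(i)}),\,\mathbf{w}^*\rangle}{\|\mathbf{u}^{(i)}\|_2}.
\]
The crucial observation is that $\mathbf{w}^*$ satisfies the optimality condition $\langle\mathbf{w}^*,\mathbf{r}^{\sim\mu(\cdot|\mathbf{w}^*)}_\mathcal{G}-\mathbf{r}^*\rangle=0$, and since $\mathbf{r}^{\sim\mu(\cdot|\mathbf{w}^*)}_\mathcal{G}$ maximizes $\langle\mathbf{w}^*,\mathbf{r}\rangle$ over $\mathcal{G}$ (from the LLP and Corollary~\ref{cor:optimal_condition_w}), this forces $\langle\mathbf{w}^*,\mathbf{r}^{(i)}-\mathbf{r}^*\rangle\leq 0$, i.e. $\langle\mathbf{w}^*,\mathbf{r}^*-\mathbf{r}^{(i)}\rangle\geq 0$. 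Consequently the numerator splits as $(a^{(i)}+b^{(i)})\langle\mathbf{w}^{(i)},\mathbf{w}^*\rangle+\langle\mathbf{r}^*-\mathbf{r}^{(i)},\mathbf{w}^*\rangle$, whose last term is nonnegative, so the ratio $\langle\mathbf{w}^{(i+1)},\mathbf{w}^*\rangle/\langle\mathbf{w}^{(i)},\mathbf{w}^*\rangle$ is bounded below by $(a^{(i)}+b^{(i)})/\|\mathbf{u}^{(i)}\|_2$ after dividing through.

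The remaining work is to lower-bound $(a^{(i)}+b^{(i)})/\|\mathbf{u}^{(i)}\|_2$ by the explicit expression in \eqref{eq:lemma:ww_monotonic}. I would expand $\|\mathbf{u}^{(i)}\|_2^2$ using $\mathbf{u}^{(i)}=(a^{(i)}+b^{(i)})\mathbf{w}^{(i)}+(\mathbf{r}^*-\mathbf{r}^{(i)})$ and the unit-norm of $\mathbf{w}^{(i)}$, obtaining $\|\mathbf{u}^{(i)}\|_2^2=(a^{(i)}+b^{(i)})^2+2(a^{(i)}+b^{(i)})\langle\mathbf{w}^{(i)},\mathbf{r}^*-\mathbf{r}^{(i)}\rangle+\|\mathbf{r}^*-\mathbf{r}^{(i)}\|_2^2$. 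The choice $a^{(i)}=\|\mathbf{r}^*-\mathbf{r}^{(i)}\|_2^2/\langle\mathbf{w}^{(i)},\mathbf{r}^{(i)}-\mathbf{r}^*\rangle$ from \eqref{eq:a_per_iteration_definition} is engineered precisely so that the cross term cancels cleanly against $\|\mathbf{r}^*-\mathbf{r}^{(i)}\|_2^2$; substituting and simplifying should leave $\|\mathbf{u}^{(i)}\|_2^2=(a^{(i)}+b^{(i)})^2\bigl[1-\langle\mathbf{w}^{(i)},\mathbf{r}^{(i)}-\mathbf{r}^*\rangle^2/\|\mathbf{r}^*-\mathbf{r}^{(i)}\|_2^2\bigr]$ up to the $b^{(i)}$ contribution, after which I replace $\|\mathbf{r}^*-\mathbf{r}^{(i)}\|_2^2$ by its upper bound $\hat{R}^2$ (valid since $\mathbf{r}^{(i)},\mathbf{r}^*\in\mathcal{G}$) to loosen the denominator into the stated form. \textbf{The main obstacle} I anticipate is handling the $b^{(i)}$ term in this algebra: the cancellation is exact for $a^{(i)}$ alone, so I must verify that including $b^{(i)}\mathbf{w}^{(i)}$ (needed to keep $\mathbf{u}^{(i)}$ positive) only improves the ratio rather than spoiling the bound, which will require checking the sign of $b^{(i)}$ and arguing that the extra positive shift along $\mathbf{w}^{(i)}$ moves the normalized vector no further from $\mathbf{w}^*$ than the $a^{(i)}$-only step would. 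Establishing this monotone behavior under the positivity correction, while simultaneously guaranteeing the denominators $\langle\mathbf{w}^{(i)},\mathbf{r}^{(i)}-\mathbf{r}^*\rangle$ are strictly negative (so the algorithm has not yet terminated and $a^{(i)}$ is well-defined), is the delicate heart of the argument.
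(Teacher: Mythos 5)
Your overall strategy is the same as the paper's: bound $\langle\mathbf{w}^{(i+1)},\mathbf{w}^*\rangle/\langle\mathbf{w}^{(i)},\mathbf{w}^*\rangle$ from below by $(a^{(i)}+b^{(i)})/\|\mathbf{u}^{(i)}\|_2$ using $\langle\mathbf{w}^*,\mathbf{r}^*-\mathbf{r}^{(i)}\rangle\geq 0$, then expand $\|\mathbf{u}^{(i)}\|_2^2$ and exploit the engineered cancellation from the choice of $a^{(i)}$; part 2) is handled identically (Cauchy--Schwarz above, monotonicity plus $\ell_1\geq\ell_2$ on $\mathbf{w}^{(1)}$ below, where the paper instead uses $w^*_k\geq(w^*_k)^2$ --- both work). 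The skeleton is correct, including the observation that non-termination forces $\langle\mathbf{w}^{(i)},\mathbf{r}^{(i)}-\mathbf{r}^*\rangle>0$ so that $a^{(i)}$ is well-defined and positive.

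The genuine gap is exactly the step you flag and defer: the $b^{(i)}$ contribution. Your anticipated resolution --- that the extra shift $b^{(i)}\mathbf{w}^{(i)}$ ``only improves the ratio'' --- is wrong in direction. With $b^{(i)}=0$ the cancellation gives $\|\mathbf{u}^{(i)}\|_2^2=(a^{(i)})^2\bigl[1-(\langle\mathbf{w}^{(i)},\mathbf{r}^{(i)}-\mathbf{r}^*\rangle)^2/\|\mathbf{r}^*-\mathbf{r}^{(i)}\|_2^2\bigr]$, i.e.\ a contraction factor with denominator $\hat{R}^2$; including $b^{(i)}>0$ strictly \emph{worsens} the ratio $\|\mathbf{u}^{(i)}\|_2/(a^{(i)}+b^{(i)})$, and the factor $2$ in $2\hat{R}^2$ is precisely the price paid for the positivity correction. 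What must actually be shown is that the damage is bounded, namely $(a^{(i)}+b^{(i)})\langle\mathbf{w}^{(i)},\mathbf{r}^{(i)}-\mathbf{r}^*\rangle< 2\|\mathbf{r}^*-\mathbf{r}^{(i)}\|_2^2$, which the paper obtains by writing $b^{(i)}\langle\mathbf{w}^{(i)},\mathbf{r}^{(i)}-\mathbf{r}^*\rangle=\sum_k b^{(i)}w^{(i)}_k(r^{(i)}_k-r^*_k)$ and using the definition $b^{(i)}=\max_k (r^{(i)}_k-r^*_k)/w^{(i)}_k$ componentwise to compare each summand with $(r^{(i)}_k-r^*_k)^2$, giving a second copy of $\|\mathbf{r}^*-\mathbf{r}^{(i)}\|_2^2$. (Even in the paper this comparison is only an upper bound for the indices with $r^{(i)}_k-r^*_k\leq 0$; the terms with $r^{(i)}_k-r^*_k>0$ are handled with an explicit ``$\approx$'', so the published argument is itself not airtight at this point.) Without this componentwise estimate your argument establishes monotonicity of $\langle\mathbf{w}^{(i)},\mathbf{w}^*\rangle$ but not the quantitative rate $\bigl[1-(\langle\mathbf{w}^{(i)},\mathbf{r}^{(i)}-\mathbf{r}^*\rangle)^2/(2\hat{R}^2)\bigr]^{-1/2}$ that the convergence and complexity results in Theorem \ref{theorm:convergence_mvwo} and Section \ref{subsec:complexity_of_dwo} rely on.
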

\begin{proof}
The proof can be found in the appendix.
\end{proof}
Lemma \ref{lemma:ww_per_iteration} shows that $\langle\mathbf{w}^{(i)},\mathbf{w}^*\rangle$ is strictly increasing as $i$ increases.
Using the above statements, we can derive the convergence of Algorithm \ref{alg:DWO_algorithm} as

\begin{theorem}\label{theorm:convergence_mvwo}
Algorithm \ref{alg:DWO_algorithm} converges, i.e., $|\langle\mathbf{w}^{(i)},\mathbf{r}^{(i)}-\mathbf{r}^*\rangle|$ converges to $0$ as the number of iterations $i$ increases.
\end{theorem}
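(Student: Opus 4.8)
The plan is to reduce the vector convergence statement to a scalar argument about the sequence $c_i \triangleq \langle\mathbf{w}^{(i)},\mathbf{w}^*\rangle$ for a fixed but arbitrary $\mathbf{w}^*\in\mathcal{W}$, and then let the two parts of Lemma \ref{lemma:ww_per_iteration} drive $|\langle\mathbf{w}^{(i)},\mathbf{r}^{(i)}-\mathbf{r}^*\rangle|$ to zero via the monotone convergence theorem followed by a squeeze argument. Only the non-terminating case needs analysis, since if Algorithm \ref{alg:DWO_algorithm} halts at some finite $i$ then $|\langle\mathbf{w}^{(i)},\mathbf{r}^{(i)}-\mathbf{r}^*\rangle|<\hat{\epsilon}$ already holds by the termination rule.

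First I would invoke part 2 of Lemma \ref{lemma:ww_per_iteration}, which bounds $\frac{1}{\sqrt{K}}\leq c_i\leq1$ for all $i$, together with part 1, which gives $c_{i+1}/c_i>1$ and hence shows $c_i$ is strictly increasing. A bounded, monotonically increasing real sequence converges; let $L$ be its limit, so $L\in[\frac{1}{\sqrt{K}},1]$ and in particular $L>0$. Because $c_{i+1}\to L$ and $c_i\to L$ with $L>0$, the ratio satisfies $c_{i+1}/c_i\to L/L=1$.

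Next I would combine this with the chain of inequalities in part 1 of Lemma \ref{lemma:ww_per_iteration},
\begin{equation}
1 < \left[1 - \frac{(\langle\mathbf{w}^{(i)},\mathbf{r}^{(i)}-\mathbf{r}^*\rangle)^2}{2\hat{R}^2}\right]^{-\frac{1}{2}} < \frac{c_{i+1}}{c_i} ,
\end{equation}
so that the squeeze theorem forces the middle expression to tend to $1$ as well. Inverting the algebra, the inverse-square-root expression tending to $1$ is equivalent to $1 - \frac{(\langle\mathbf{w}^{(i)},\mathbf{r}^{(i)}-\mathbf{r}^*\rangle)^2}{2\hat{R}^2}\to1$, whence $(\langle\mathbf{w}^{(i)},\mathbf{r}^{(i)}-\mathbf{r}^*\rangle)^2\to0$ since $\hat{R}$ is a fixed finite constant. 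Taking square roots yields $|\langle\mathbf{w}^{(i)},\mathbf{r}^{(i)}-\mathbf{r}^*\rangle|\to0$, which is exactly the claimed convergence and, by Corollary \ref{cor:optimal_condition_w}, the convergence of $\mathbf{w}^{(i)}$ to the optimal weights.

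Since the analytical heavy lifting—relating the inner-product growth ratio to the error term and establishing the boundedness in part 2—has already been discharged in Lemma \ref{lemma:ww_per_iteration}, this theorem is a relatively clean consequence and I do not anticipate a serious obstacle in the steps above; the main difficulty was deferred to that lemma. The one point requiring care is the implicit assumption that $\mathcal{W}$ is nonempty so that a reference $\mathbf{w}^*$ exists, which follows from the solvability of the convex problem \eqref{eq:defi:r_g_*} and the optimality characterization in Corollary \ref{cor:optimal_condition_w}. A secondary subtlety is that the monotone convergence argument yields only $c_{i+1}/c_i\to1$ without a rate, so extracting the explicit $O(\frac{1}{\hat{\epsilon}^2}K\log K)$ iteration count advertised earlier would require the additional quantitative handling of $\hat{R}$; the bare convergence statement proved here, however, needs no such rate.
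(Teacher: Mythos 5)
Your proposal is correct and follows essentially the same route as the paper: both rest on Lemma \ref{lemma:ww_per_iteration}, apply the monotone convergence theorem to $\langle\mathbf{w}^{(i)},\mathbf{w}^*\rangle$, and squeeze the bracketed term between $1$ and a ratio tending to $1$ to conclude $|\langle\mathbf{w}^{(i)},\mathbf{r}^{(i)}-\mathbf{r}^*\rangle|\to 0$. The only cosmetic difference is that you pass to the limit $L$ and use $c_{i+1}/c_i\to L/L=1$ directly, whereas the paper upper-bounds the ratio by $o(\mathbf{w}^*)/\langle\mathbf{w}^{(i)},\mathbf{w}^*\rangle$ with $o(\mathbf{w}^*)$ the supremum; these are equivalent.
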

\begin{proof}
Define the supremum of $\langle\mathbf{w}^{(i)},\mathbf{w}^*\rangle$ as $o(\mathbf{w}^*)\triangleq\sup_i\langle\mathbf{w}^{(i)},\mathbf{w}^*\rangle$, $\forall \mathbf{w}^*\in \mathcal{W}$,
where $o(\mathbf{w}^*)$ is a finite number less than or equal to $1$ because $\langle\mathbf{w}^{(i)},\mathbf{w}^*\rangle$ is less than or equal to $1$, $\forall i$, as shown in Lemma \ref{lemma:ww_per_iteration}.
We consider following inequalities in the $i$-th iteration that use the definition of $o(\mathbf{w}^*)$, $\forall \mathbf{w}^*\in\mathcal{W}$,
\begin{equation}\label{eq:ww_to_o_ww}
\begin{aligned}
\langle\mathbf{w}^{(i+1)},\mathbf{w}^*\rangle \leq o(\mathbf{w}^*)\Rightarrow\frac{\langle\mathbf{w}^{(i+1)},\mathbf{w}^*\rangle}{\langle\mathbf{w}^{(i)},\mathbf{w}^*\rangle} \leq \frac{o(\mathbf{w}^*)}{\langle\mathbf{w}^{(i)},\mathbf{w}^*\rangle}\  .
\end{aligned}
\end{equation}
By applying \eqref{eq:lemma:ww_monotonic} of Lemma \ref{lemma:ww_per_iteration} to \eqref{eq:ww_to_o_ww}, we obtain
\begin{equation}\label{eq:theorem:1<ca<ww}
\begin{aligned}
1< \underbrace{\left[1 - \frac{(\langle\mathbf{w}^{(i)},\mathbf{r}^{(i)}-\mathbf{r}^*\rangle)^2}{2\hat{R}^2} \right]^{-\frac{1}{2}}}_{\text{(a)}}< \frac{o(\mathbf{w}^*)}{\langle\mathbf{w}^{(i)},\mathbf{w}^*\rangle} , \forall \mathbf{w}^*\in\mathcal{W} \ .
\end{aligned}
\end{equation}
Note that $\langle\mathbf{w}^{(i)},\mathbf{w}^*\rangle$ converges to $o(\mathbf{w}^*)$ based on the monotone convergence theorem of a sequence, which implies $\frac{o(\mathbf{w}^*)}{\langle\mathbf{w}^{(i)},\mathbf{w}^*\rangle}$ converges to $1$. 
Thus, the value of (a) in \eqref{eq:theorem:1<ca<ww} is between $1$ and a real number converging to $1$. This implies that (a) in \eqref{eq:theorem:1<ca<ww} also converges to $1$, where
\begin{equation}
\begin{aligned}
\big[1 - \frac{(\langle\mathbf{w}^{(i)},\mathbf{r}^{(i)}-\mathbf{r}^*\rangle)^2}{2\hat{R}^2} \big]^{-\frac{1}{2}}\to1,
\Rightarrow |\langle\mathbf{w}^{(i)},\mathbf{r}^{(i)}-\mathbf{r}^*\rangle|  \to 0 ,
\end{aligned}
\end{equation}
which proves the convergence of Algorithm \ref{alg:DWO_algorithm}.
\end{proof}

\subsection{Computational Complexity of the Proposed Iterative Solver}\label{subsec:complexity_of_dwo}
Next, we study the computational complexity of Algorithm \ref{alg:DWO_algorithm}, specifically regarding the number of iterations required in Algorithm \ref{alg:DWO_algorithm} before it converges.
\begin{cor}
Algorithm \ref{alg:DWO_algorithm} converges (i.e., $|\langle\mathbf{w}^{(i)},\mathbf{r}^{(i)}-\mathbf{r}^*\rangle|$ converges to $0$) with a convergence error $\hat{\epsilon}$ in $O(\frac{1}{\hat{\epsilon}^2}K\log K)$ iterations.
\end{cor}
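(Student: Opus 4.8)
The plan is to show that whenever Algorithm~\ref{alg:DWO_algorithm} has not yet terminated, the inner product $\langle\mathbf{w}^{(i)},\mathbf{w}^*\rangle$ with any fixed optimal $\mathbf{w}^*\in\mathcal{W}$ grows by at least a constant multiplicative factor per iteration, while part~2) of Lemma~\ref{lemma:ww_per_iteration} caps this inner product at $1$; the number of iterations is then forced to be small. First I would fix $\mathbf{w}^*\in\mathcal{W}$ and observe that, as long as the termination test of Algorithm~\ref{alg:DWO_algorithm} fails at iteration $i$, we have $|\langle\mathbf{w}^{(i)},\mathbf{r}^{(i)}-\mathbf{r}^*\rangle|\geq\hat{\epsilon}$. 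Feeding this into part~1) of Lemma~\ref{lemma:ww_per_iteration}, and using that its right-hand side is increasing in $|\langle\mathbf{w}^{(i)},\mathbf{r}^{(i)}-\mathbf{r}^*\rangle|$, gives
\begin{equation}
\frac{\langle\mathbf{w}^{(i+1)},\mathbf{w}^*\rangle}{\langle\mathbf{w}^{(i)},\mathbf{w}^*\rangle}> c\triangleq\Big[1-\frac{\hat{\epsilon}^2}{2\hat{R}^2}\Big]^{-\frac12}>1 ,
\end{equation}
where $c$ is a constant independent of the iteration index $i$.

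Next I would telescope this ratio over the first $n$ non-terminating iterations to obtain $\langle\mathbf{w}^{(n)},\mathbf{w}^*\rangle>\langle\mathbf{w}^{(1)},\mathbf{w}^*\rangle\,c^{\,n-1}\geq c^{\,n-1}/\sqrt{K}$, where the last step uses the lower bound $\langle\mathbf{w}^{(1)},\mathbf{w}^*\rangle\geq 1/\sqrt{K}$ from part~2) of Lemma~\ref{lemma:ww_per_iteration}. Combining with the upper bound $\langle\mathbf{w}^{(n)},\mathbf{w}^*\rangle\leq 1$ from the same part yields $c^{\,n-1}<\sqrt{K}$, hence $n-1<\frac{\ln K}{2\ln c}$, which already bounds the number of non-terminating iterations. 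To turn this into an explicit rate I would lower-bound $\ln c$ by the elementary inequality $-\ln(1-x)\geq x$ for $x\in[0,1)$, which gives $\ln c=-\tfrac12\ln\!\big(1-\tfrac{\hat{\epsilon}^2}{2\hat{R}^2}\big)\geq \tfrac{\hat{\epsilon}^2}{4\hat{R}^2}$, so that
\begin{equation}
n-1<\frac{2\hat{R}^2\ln K}{\hat{\epsilon}^2} .
\end{equation}

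Finally I would convert the $\hat{R}$-dependence into the claimed $K$-dependence by arguing $\hat{R}^2=O(K)$. The key is that each component $r_k$ of any $\mathbf{r}\in\mathcal{G}$ is bounded by a constant independent of $K$: constraint~\eqref{eq:const:rate_upper_bound} gives $r_k\leq\Delta_0 B\,p_k\log_2(1+y_k/p_k)$ with $0\leq p_k\leq 1$ from~\eqref{eq:const:p_K}, while the $2\times2$ principal minor of $\mathbf{H}$ indexed by $(k,K+k)$, which by~\eqref{eq:const:covariance_matrix_xpsi}, \eqref{eq:const:covariance_matrix_psipsi}, \eqref{eq:const:covariance_matrix_xx} and the positive semidefiniteness~\eqref{eq:const:covariance_matrix_sdp} satisfies $(y_k-p_k m^\phi_k)^2\leq (p_k-p_k^2)v^\phi_k$, caps $p_k\log_2(1+y_k/p_k)$ by a finite number $r_{\max}$ determined only by $m^\phi_k$, $v^\phi_k$, $\Delta_0$ and $B$. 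Hence $\mathcal{G}\subseteq[0,r_{\max}]^K$ and $\hat{R}=\max_{\mathbf{r}^\textrm{a},\mathbf{r}^\textrm{b}\in\mathcal{G}}\|\mathbf{r}^\textrm{a}-\mathbf{r}^\textrm{b}\|_2\leq r_{\max}\sqrt{K}$, so $\hat{R}^2=O(K)$. Substituting gives $n=O\!\big(\tfrac{\hat{R}^2}{\hat{\epsilon}^2}\ln K\big)=O\!\big(\tfrac{1}{\hat{\epsilon}^2}K\log K\big)$, as claimed. I expect the main obstacle to be exactly this last step: making the $K$-independence of the per-user rate bound $r_{\max}$ rigorous, in particular handling the regime $p_k\to0$, where $y_k/p_k$ can diverge yet $p_k\log_2(1+y_k/p_k)\to0$; the geometric-growth argument of the earlier paragraphs is otherwise routine.
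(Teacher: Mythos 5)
Your proposal is correct and follows essentially the same route as the paper: telescope the per-iteration multiplicative growth of $\langle\mathbf{w}^{(i)},\mathbf{w}^*\rangle$ guaranteed by Lemma \ref{lemma:ww_per_iteration} part 1) under the non-termination condition $|\langle\mathbf{w}^{(i)},\mathbf{r}^{(i)}-\mathbf{r}^*\rangle|\geq\hat{\epsilon}$, squeeze the product between the bounds $1/\sqrt{K}$ and $1$ from part 2), and then use $\hat{R}^2=O(K)$. If anything, you are more careful than the paper in two spots where it hand-waves: you replace its ``$\approx$'' step with the inequality $-\ln(1-x)\geq x$, and you justify ``$\hat{R}$ is roughly proportional to $\sqrt{K}$'' via the per-coordinate bound $r_k\leq r_{\max}$, which is exactly what the paper's appendix proof of the boundedness of $\mathcal{G}$ delivers.
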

\begin{proof}
Assuming that Algorithm \ref{alg:DWO_algorithm} terminates at the $I$-th iteration, i.e., $|\langle\mathbf{w}^{(i)},\mathbf{r}^{(i)}-\mathbf{r}^*\rangle|$ is less than $\hat{\epsilon}$ at the $I$-th iteration, (for simplicity, we assume that $I$ is greater than $1$), we obtain $|\langle\mathbf{w}^{(i)},\mathbf{r}^{(i)}-\mathbf{r}^*\rangle|\geq\hat{\epsilon}$, $\forall i<I$,
which can be applied to \eqref{eq:lemma:ww_monotonic} as
\begin{equation}\label{}
\begin{aligned}
\frac{\langle\mathbf{w}^{(i+1)},\mathbf{w}^*\rangle}{\langle\mathbf{w}^{(i)},\mathbf{w}^*\rangle}>\left[1 - \frac{\hat{\epsilon}^2}{2\hat{R}^2} \right]^{-\frac{1}{2}} \ ,\ \forall i=1,\dots,I-1 \ .
\end{aligned}
\end{equation}
By multiplying both sides of the above inequalities for $i=1,\dots,I-1$, we obtain
\begin{equation}\label{eq:lemma:ww_monotonic_multiply}
\begin{aligned}
\frac{\langle\mathbf{w}^{(I)},\mathbf{w}^*\rangle}{\langle\mathbf{w}^{(1)},\mathbf{w}^*\rangle}>\left[1 - \frac{\hat{\epsilon}^2}{2\hat{R}^2} \right]^{-\frac{1}{2}(I-1)} \ .
\end{aligned}
\end{equation}
Note that $\langle\mathbf{w}^{(1)},\mathbf{w}^*\rangle$ is greater than or equal to $\frac{1}{\sqrt{K}}$ and $\langle\mathbf{w}^{(I)},\mathbf{w}^*\rangle$ is less than or equal to $1$ according to Lemma \ref{lemma:ww_per_iteration}. By applying the above facts to \eqref{eq:lemma:ww_monotonic_multiply}, we obtain
\begin{equation}\label{}
\begin{aligned}
&\left[1 - \frac{\hat{\epsilon}^2}{2\hat{R}^2} \right]^{-\frac{1}{2}(I-1)}<\frac{\langle\mathbf{w}^{(I)},\mathbf{w}^*\rangle}{\langle\mathbf{w}^{(1)},\mathbf{w}^*\rangle} < \sqrt{K} \ , \\
\Rightarrow & I<\frac{\log K}{-\log\left[1-\frac{\hat{\epsilon}^2}{2\hat{R}^2}\right]} +1 \approx \frac{2\hat{R}^2\log K}{\hat{\epsilon}^2}.
\end{aligned}
\end{equation}
Since $\hat{\epsilon}$ is a small positive constant and $\hat{R}$ is roughly proportional to $\sqrt{K}$, the number of iterations of Algorithm~\ref{alg:DWO_algorithm} required is $O(\frac{1}{\hat{\epsilon}^2}K\log K)$ (or $O(K\log K)$, assuming constant $\hat{\epsilon}$).
\end{proof}

We note that this computational complexity describes the number of iterations required for updating the weights in Algorithm \ref{alg:DWO_algorithm}. Meanwhile, the computational complexity of solving the LLP in \eqref{eq:lower_lvl_per_iteration} in each iteration depends on the specific implementation of the convex optimization solver that typically has a polynomial complexity \cite{bubeck2015convex}. Also, the update of weights in each iteration has a linear computational complexity, as mentioned in Section \ref{subsec:step_calculation}.

\section{Online MVWO Architecture for Varying Mean and Variance of SNRs}\label{sec:online_MWS_archiecture}
In this section, we propose an online architecture to apply our proposed MVWO method in networks with non-stationary wireless channels, e.g., users have mobility and distances between users and BSs change. In such cases, the statistics of SNRs, i.e., the mean and variance of users' SNRs change over time.
As shown in Fig.~\ref{fig:op_sch_online_architecture}, the online MVWO architecture includes the scheduler at the BS and an edge server.
\begin{figure}[!t]
\centering
\includegraphics[scale=0.8]{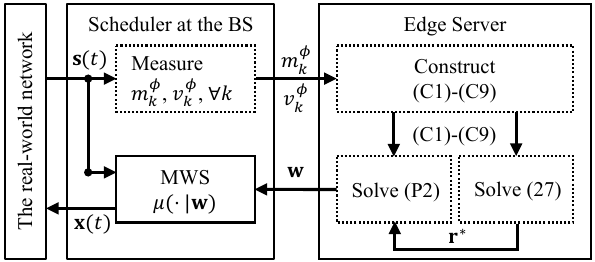}
\caption{The proposed online MVWO architecture.}
\label{fig:op_sch_online_architecture}
\vspace{-0.2cm}
\end{figure}

In each time slot $t$, the scheduler at the BS observes the channel state $\mathbf{s}(t)$ and generates a binary user scheduling action $\mathbf{x}(t)$ according to the MWS defined in \eqref{eq:max_weight_scheduler}.
In real-world networks, e.g., 5G New Radio networks, SNRs can be measured based on the CSI reference signals transmitted with data signals in the wireless channel \cite{3gpp.38.214} and binary user scheduling actions can be mapped into radio resource configurations, including modulation-and-coding schemes and resource block allocations in each transmission, as shown in \cite{gu2021knowledge}.
Meanwhile, the scheduler continuously estimates the mean and variance of users' SNRs by a moving average over the observed values of the SNRs in the past slots in two following steps,
\begin{equation}\label{eq:m_v_measure_online}
\begin{aligned}
m^{\phi}_{k}&\leftarrow \frac{1}{\beta}\sum_{\tau=0}^{\beta-1} \phi_k(t-\tau)\ , \forall k \ ,\\
v^{\phi}_{k}&\leftarrow \frac{1}{\beta}\sum_{\tau=0}^{\beta-1} \big(\phi_k(t-\tau)\big)^2 - (m^{\phi}_{k})^2\ , \forall k \ ,
\end{aligned}
\end{equation}
where $\beta$ is the number of the past time slots used in averaging.
The measured mean and variance of the SNRs are uploaded to the edge server that calculates the weights of the MWS.

The edge server first constructs the constraints in \eqref{eq:const:rate_lower_bound}-\eqref{eq:const:covariance_matrix_sdp} and constantly updates the constraints' parameters based on the latest sent mean and variance of the SNRs from the scheduler. 
Then, the convex optimization problem in \eqref{eq:defi:r_g_*} is constructed and solved to find $\mathbf{r}^*$ defined in \eqref{eq:defi:r_g_*_g} based on \eqref{eq:const:rate_lower_bound}-\eqref{eq:const:covariance_matrix_sdp}. Next, the iterations in lines 3-12 of Algorithm \ref{alg:DWO_algorithm} are executed to solve \eqref{eq:prob:deterministic_weight_optimization_G} based on \eqref{eq:const:rate_lower_bound}-\eqref{eq:const:covariance_matrix_sdp} and $\mathbf{r}^*$, whose return value $\mathbf{w}$ is sent to the scheduler as the weights of the MWS. The above process in the edge server is then repeated until the BS terminates.

The above architecture can also be used in scenarios where the number of users $K$ varies over time.
When a new user joins the network, we need to keep measuring the mean and variance of its SNRs, as shown in \eqref{eq:m_v_measure_online}.
Then, the architecture includes the above mean and variance of the new user when constructing the constraints and computing the rates/weights. Meanwhile, when a user leaves the network, the architecture removes the user's weight, reconstructs the constraints, and recomputes the rates/weights. Note that since this work focuses on the scheduler design using channel statistics, we do not evaluate the architecture for the varying number of users in simulations in the following section.

\section{Evaluation of Proposed Methods}\label{sec:simulation_results}
In this section, we provide the simulation results that evaluate our proposed methods.
\subsection{Simulation Configurations}\label{subsec:simulation_config}
We set $\Delta_\text{0}=1$ (second) and $B=1$ (Hertz) for simplicity in simulations as they linearly scale the average rates while not affecting the performance of our methods. We vary the number of users, $K$, for different cases.
Unless specifically stated, each user's mean of the SNRs, $m_k^{\phi}$, $k=1,\dots,K$, (i.e., the large-scale fading gain) in decibel (dB) is normally distributed with the mean of $10$ dB and the standard deviation of $5$ dB for different episodes \cite{goldsmith2005wireless,tanghe2008industrial} and it remains constant within one episode.
The variance of the SNRs depends on the small-scale fading gains of users, which are i.i.d. in each slot and follow the same normalized Rician distributions with the ratio of the average power in the line-of-sight path to that in the non-line-of-sight paths of $10$ dB \cite{goldsmith2005wireless,tanghe2008industrial}. With the above configurations, SNRs of all users have the variance of $0.17(m_k^{\phi})^2$ or $4.00$ in decimal or in decibel representation, respectively, where $k=1,\dots,K$.

\subsection{Other OS Approaches Compared in Simulation}
\subsubsection{MWS using no prior knowledge of statistical CSI}
We compare our MVWO method with the MWS approaches in \cite{tse1999transmitter,agrawal2002optimality} that can find the optimal MWS's weights to maximize the studied utility function, as defined in \eqref{eq:defi:utility_function_f}. 
Specifically, the weights in these approaches are tuned in every slot as \footnote{The weight update method in \eqref{eq:pf_scheduler} is widely referred to as the proportional fair scheduler \cite{tse2005fundamentals}.}
\begin{equation}\label{eq:pf_scheduler}
\begin{aligned}
\lambda_k \leftarrow (1-\frac{1}{\gamma} )\lambda_k +  \frac{1}{\gamma}x_k(t) \log_2\big(1+\phi_k(t)\big), \
w_k \leftarrow \frac{1}{\lambda_k},\  \forall k ,
\end{aligned}
\end{equation}
where $\lambda_k$ denotes an exponential average of the scheduled instantaneous bit rate of user $k$ (its initial value is set to a small positive number, e.g., $10^{-5}$, to avoid division by zero). 
$\gamma$ in \eqref{eq:pf_scheduler} denotes the size of the exponential average time window, e.g., $100$, $1000$ or $10000$.
Note that the studied utility function is maximized by the above methods when $\gamma$ approaches infinity and the MWS's weights are tuned after sufficient time \cite{tse2005fundamentals}.
Since these approaches use no statistical CSI, we refer to them as statistics-unaware weight optimization (SUWO).
We denote the weights tuned after $\tilde{T}$ slots by using the SUWO methods with $\gamma$ as $\mathbf{w}^{{\rm SUWO}\sim\gamma}_{\tilde{T}}$, where $\tilde{T}$ is varied for different cases and $\mathbf{w}^{{\rm SUWO}\sim\gamma}_{\tilde{T}}$ is normalized after tuning. We denote the average rates achieved by the MWS, $\mu(\cdot|\mathbf{w}^{{\rm SUWO}\sim\gamma}_{\tilde{T}})$, as $\mathbf{r}^{\sim\mu(\cdot|\mathbf{w}^{{\rm SUWO}\sim\gamma}_{\tilde{T}})}$.

\subsubsection{MWS using prior knowledge of the mean of CSI}
Also, a heuristic MWS will be compared, whose weights are designed based on only the mean of the CSI. Note that the studied utility function is a throughput fairness criterion. We can provide fair scheduling decisions for users by setting the weight of user $k$ as the inverse of the average spectrum efficiency, i.e.,
\begin{equation}\label{eq:hf_scheduler}
\begin{aligned}
w_k \leftarrow  \frac{1}{\expt[\log_2(1+\phi_k(t))]}\ ,\  \forall k\ ,
\end{aligned}
\end{equation}
which prevents the MWS from starving those users with low spectrum efficiency.
We refer to the MWS defined by \eqref{eq:hf_scheduler} as a heuristic fairness scheduler (HFS).
We denote the weights in the HFS calculated using the $\tilde{T}$-slot averaged spectrum efficiency as $\mathbf{w}^{{\rm HFS}}_{\tilde{T}}$.

\subsubsection{MDP-based OS using no prior knowledge of statistical CSI}
Additionally, we will compare our method with the MDP-based OS optimized by DRL that uses no statistical CSI.
The reward signal for the studied utility function in every slot is designed in \cite{chen2021bringing} as 
\begin{equation}\label{}
\begin{aligned}
\delta_k(t) &= x_k(t)\log_2(1+\phi_k(t)) \\
&\qquad\qquad\cdot \big[\frac{1}{t}\sum_{\tau=1}^{t} x_k(\tau)\log_2(1+\phi_k(\tau))\big]^{-1} \ , \ \forall k .
\end{aligned}
\end{equation}
The state and the action in the MDP are the channel state and the user scheduling actions in every slot, $\mathbf{s}(t)$ and $\mathbf{x}(t)$, defined in \eqref{eq:defi:s} and \eqref{eq:defi:x}, respectively. 
We use the actor-critic DRL algorithm \cite{chen2021bringing,gu2021knowledge} to train the NN, $\pi(\cdot|\theta)$, as the MDP-based OS, where $\theta$ are the parameters of the NN.
We denote the NN trained after $\tilde{T}$ slots as $\pi(\cdot|\theta^{{\rm DRL}}_{\tilde{T}})$, and its initial values are randomized.

\subsection{Performance of the Proposed Rate Estimation Method}

\begin{figure}[t]
\begin{subfigure}[b]{1\columnwidth}
\centering
\includegraphics[scale=0.85]{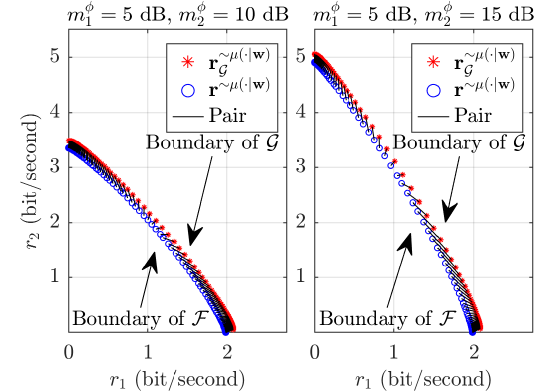}
\caption{Sweeping $\mathbf{w}$ when $K=2$.}
\label{subfig:plot_bs_shape_sigma_50_d}
\end{subfigure}
\begin{subfigure}[b]{1\columnwidth}
\centering
\includegraphics[scale=0.85]{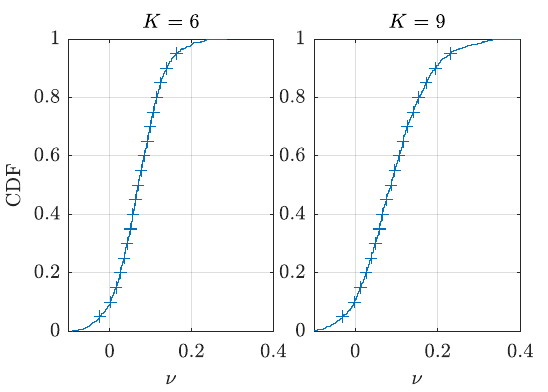}
\caption{$\mathbf{w}=\mathbf{w}^{{\rm SUWO}\sim\gamma}_{\tilde{T}}$ when $K=6$ or $9$.}
\label{fig:plot_n_user_ub_pf_cdf}
\end{subfigure}
\caption{The average rates achieved by $\mu(\cdot|\mathbf{w})$ and their estimated values in \eqref{eq:prob:estimation_mws_r_g} for different $\mathbf{w}$.}\label{fig:plot_approximated_rates}
\vspace{-0.2cm}
\end{figure}

We first compare the estimated value of the average rates scheduled by $\mu(\cdot|\mathbf{w})$ in \eqref{eq:prob:estimation_mws_r_g}, $\mathbf{r}^{\sim\mu(\cdot|\mathbf{w})}_{\mathcal{G}}$, and their measured value, $\mathbf{r}^{\sim\mu(\cdot|\mathbf{w})}$, for two users, i.e., $K=2$. 
Two users' weights are varied as $w_1 = \sin(0.005i\cdot\frac{\pi}{2})$ and $w_2= \cos(0.005i\cdot\frac{\pi}{2})$ for $i=1,\dots,199$, which are all feasible weights, i.e., $\mathbf{w}>0$ and $\|\mathbf{w}\|_2=1$ for all $i$. In each case of $\mathbf{w}$,
$\mathbf{r}^{\sim\mu(\cdot|\mathbf{w})}_{\mathcal{G}}$ is calculated by solving the rate estimation problem in \eqref{eq:prob:estimation_mws_r_g}, and $\mathbf{r}^{\sim\mu(\cdot|\mathbf{w})}$ is measured by averaging the scheduled instantaneous rates in one episode with $10^{5}$ slots.
In Fig. \ref{subfig:plot_bs_shape_sigma_50_d}, the estimated average rates scheduled by the MWS and their actual value measured from the simulation with the same $\mathbf{w}$ (with the legends ``$\mathbf{r}^{\sim\mu(\cdot|\mathbf{w})}_{\mathcal{G}}$'' and ``$\mathbf{r}^{\sim\mu(\cdot|\mathbf{w})}$'', respectively) are connected with a line (with the legend ``Pair'').
Two users have the mean of the SNRs as $m^{\phi}_1=3.16$ and $m^{\phi}_2=10$ in decimal format (or $m^{\phi}_1=5$ and $m^{\phi}_2=10$ in dB), or $m^{\phi}_1=5$ dB and $m^{\phi}_2=15$ dB in Fig. \ref{subfig:plot_bs_shape_sigma_50_d}. The variance of the SNRs follows the configuration in Section \ref{subsec:simulation_config}. 
The results indicate that the estimated and measured average rates at given weights are close to each other. 
Also, note that they form two boundaries of  $\mathcal{F}$ and $\mathcal{G}$, respectively, due to the structure of \eqref{eq:corollary:mws_as_solution_of_weighting_problem} and \eqref{eq:prob:estimation_mws_r_g} (e.g., maximization of the weighted sum of a vector that belongs to a convex set) according to \cite{miettinen2012nonlinear}. 
Since the two boundaries show the same shape and are close to each other, this implies that $\mathcal{G}$ is a close approximation of $\mathcal{F}$. To further illustrate the tightness of the approximation, we manually design some SNR distributions and identify the scenarios where the approximation is not close in the appendix.

Next, we use the optimal rates in the bounding set, $\mathbf{r}^*$ defined in \eqref{eq:defi:r_g_*_g}, to estimate the average rates of MWSs, $\mathbf{r}^{\sim\mu(\cdot|\mathbf{w}^{{\rm SUWO}\sim\gamma}_{\tilde{T}})}$, which is designed by the SUWO methods \cite{tse1999transmitter,agrawal2002optimality}.
We vary the number of users, $K$, as $6$ and $9$ and set $\gamma=10000$ and $\tilde{T}=1000K$, which are sufficiently large for the SUWO methods to find the weights that achieve optimal rates.
For each case of $K$, we run $1000$ episodes where users' SNRs are configured as stated in Section \ref{subsec:simulation_config} and  $\mathbf{r}^{\sim\mu(\cdot|\mathbf{w}^{{\rm SUWO}\sim\gamma}_{\tilde{T}})}$ is measured in $10^5$ slots of each episode.
Note that the $k$-th element of $\mathbf{r}^*$ and $\mathbf{r}^{\sim\mu(\cdot|\mathbf{w}^{{\rm SUWO}\sim\gamma}_{\tilde{T}})}$,  ${r}^*_k$ and ${r}^{\sim\mu(\cdot|\mathbf{w}^{{\rm SUWO}\sim\gamma}_{\tilde{T}})}_k$, are the estimated value and the measured value of user $k$'s average rate achieved by $\mu(\cdot|\mathbf{w}^{{\rm SUWO}\sim\gamma}_{\tilde{T}})$, respectively, $k=1,\dots,K$, and all users are equivalent to each other. 
Thus, we only compare the first user's estimated and measured average rate, e.g., ${r}^*_1$ and ${r}^{\sim\mu(\cdot|\mathbf{w}^{{\rm SUWO}\sim\gamma}_{\tilde{T}})}_1$, in terms of the ratio of the difference in the estimated and measured values to the measured value as $\nu\triangleq 
(r^*_1-r^{\sim\mu(\cdot|\mathbf{w}^{{\rm SUWO}\sim\gamma}_{\tilde{T}})}_1)/r^{\sim\mu(\cdot|\mathbf{w}^{{\rm SUWO}\sim\gamma}_{\tilde{T}})}_1$ whose cumulative distribution function (CDF) is shown in Fig. \ref{fig:plot_n_user_ub_pf_cdf}.
The results indicate that over $90\%$ of the estimated values of the average rates of MWSs are overshoot, or in other words, are bigger than the measured ones, e.g., $\nu>0$.
The results also indicate that the estimated average rates differ from the measured ones by approximately $0\sim20\%$, which implies that the estimated average rates of MWSs in the proposed method are close to their measured values.

\subsection{Evaluation on the Convergence of Algorithm \ref{alg:DWO_algorithm}}
\begin{figure}[t]
\begin{subfigure}[b]{1\columnwidth}
\centering
\includegraphics[scale=0.85]{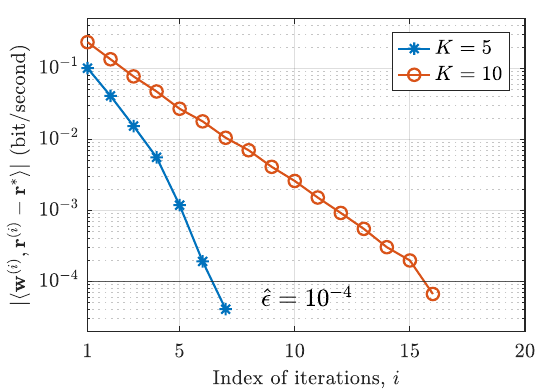}
\caption{The convergence of $|\langle\mathbf{w}^{(i)},\mathbf{r}^{(i)}-\mathbf{r}^*\rangle|$.}
\label{fig:plot_wrr_converge}
\end{subfigure}
\begin{subfigure}[b]{1\columnwidth}
\centering
\includegraphics[scale=0.85]{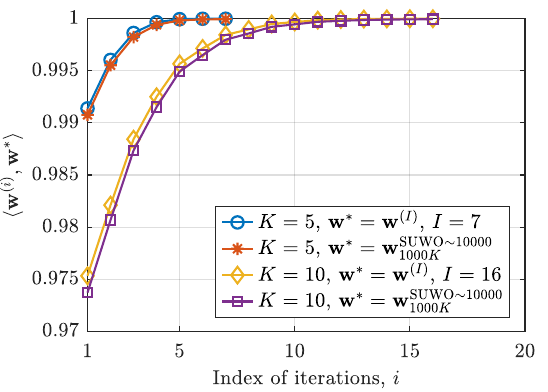}
\caption{The monotone convergence of $\langle\mathbf{w}^{(i)},\mathbf{w}^*\rangle$.}
\label{fig:plot_ww_converge}
\end{subfigure}
\caption{Evaluation of the convergence of Algorithm \ref{alg:DWO_algorithm} when $K=5$ or $10$ and $\hat{\epsilon}=10^{-4}$.}
\vspace{-0.2cm}
\end{figure}

Fig. \ref{fig:plot_wrr_converge} shows the value of $|\langle\mathbf{w}^{(i)},\mathbf{r}^{(i)}-\mathbf{r}^*\rangle|$ in each iteration of Algorithm \ref{alg:DWO_algorithm}, where the number of users are $5$ and $10$ (with the legends ``$K=5$'' and ``$K=10$''. respectively). User $k$'s mean of the SNRs is configured as $m^{\phi}_k=k+5$ dB, $k=1,\dots,K$, and users' variance of the SNRs follows the same configuration as explained in Section \ref{subsec:simulation_config}.
We set $\hat{\epsilon}=10^{-4}$.
With the above configurations, Algorithm \ref{alg:DWO_algorithm} converges in $7$ and $16$ iterations, i.e., $I=7$ and $16$, for $K=5$ and $10$, respectively, as shown in Fig. \ref{fig:plot_wrr_converge}. This validates the proof in Section \ref{subsec:convergence_of_alg}.

Additionally, we validate the monotone convergence of the sequence, $\langle\mathbf{w}^{(i)},\mathbf{w}^*\rangle$, in Fig. \ref{fig:plot_ww_converge}.
Specifically, we show the values of this sequence when we take the weights in the last iteration of the solver, $\mathbf{w}^{(I)}$, as the optimal weights $\mathbf{w}^*$ (with the legend ``$\mathbf{w}^*=\mathbf{w}^{(I)}$''), where $I$ is $7$ and $16$ for $K=5$ and $10$, respectively, as mentioned before.
We also show the values of the sequence when the weights optimized by the SUWO methods \cite{tse1999transmitter,agrawal2002optimality}, $\mathbf{w}^{{\rm SUWO}\sim\gamma}_{\tilde{T}}$, are considered as the optimal weights $\mathbf{w}^*$ (with the legend ``$\mathbf{w}^*=\mathbf{w}^{{\rm SUWO}\sim\gamma}_{\tilde{T}}$''), where $\gamma$ and $\tilde{T}$ are set to sufficiently large values as $10000$ and $1000K$, respectively.
The results indicate that the values of $\langle\mathbf{w}^{(i)},\mathbf{w}^*\rangle$ monotonically increase to $1$ during iterations, which is consistent with the proof in Lemma \ref{lemma:ww_per_iteration}. This also implies that the Euclidean distance between $\mathbf{w}^{(i)}$ and $\mathbf{w}^{{\rm SUWO}\sim\gamma}_{\tilde{T}}$ decreases to $0$ because $\|\mathbf{w}^{(i)}-\mathbf{w}^*\|_2^2=2-2\langle\mathbf{w}^{(i)},\mathbf{w}^*\rangle$.

\begin{figure}[t]
\begin{subfigure}[b]{1\columnwidth}
\centering
\includegraphics[scale=0.85]{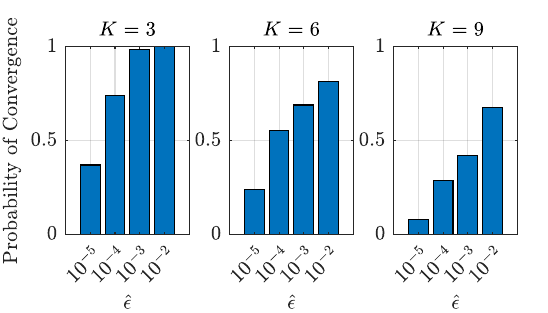}
\caption{Occurrence of convergence in $20$ iterations.}
\label{subfig:plot_eps_convergence_hist_pf_20}
\end{subfigure}
\begin{subfigure}[b]{1\columnwidth}
\centering
\includegraphics[scale=0.85]{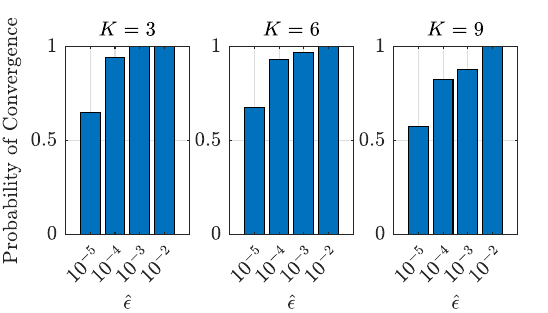}
\caption{Occurrence of convergence in $50$ iterations.}
\label{subfig:plot_eps_convergence_hist_pf_50}
\end{subfigure}
\caption{The probability that the convergence of Algorithm \ref{alg:DWO_algorithm} occurs in given iterations.}
\vspace{-0.2cm}
\end{figure}

Next, we measure the probability that the convergence of Algorithm \ref{alg:DWO_algorithm} occurs within a given number of iterations. We vary the number of users, $K$, as $3$, $6$ and $9$. For each case of $K$, we run Algorithm \ref{alg:DWO_algorithm} for $200$ times where the mean and variance of users' SNRs in each run are randomized, as explained in Section \ref{subsec:simulation_config}.
Figs. \ref{subfig:plot_eps_convergence_hist_pf_20} and \ref{subfig:plot_eps_convergence_hist_pf_50} show the probability that Algorithm \ref{alg:DWO_algorithm} converges in $20$ and $50$ iterations, respectively.
The results indicate that the algorithm is less likely to converge when $K$ is larger, or $\hat{\epsilon}$ is smaller. When the allowed number of iterations increases from $20$ to $50$, the probability of convergence increases significantly. 
The algorithm converges approximately $90\sim100\%$ in $50$ iterations when $\hat{\epsilon}$ is large. 
The above observation complies with the computational complexity analysis in Section \ref{subsec:complexity_of_dwo}.

\subsection{Performance and Time Complexity of the MVWO Method}\label{subsec:simulation_time_complexity_of_dwo}
\begin{figure}[t]
\begin{subfigure}[b]{1\columnwidth}
\centering
\includegraphics[scale=0.85]{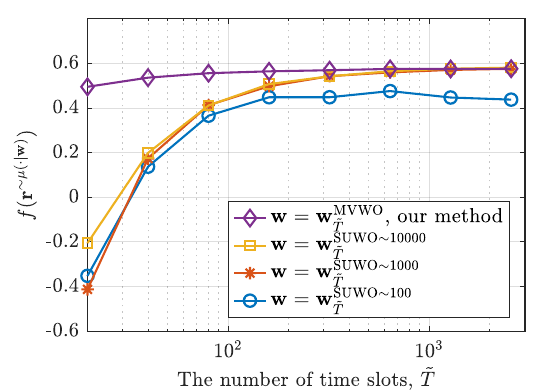}
\caption{$K=3$.}
\label{subfig:plot_3_user_pf_ctime_line}
\end{subfigure}
\begin{subfigure}[b]{1\columnwidth}
\centering
\includegraphics[scale=0.85]{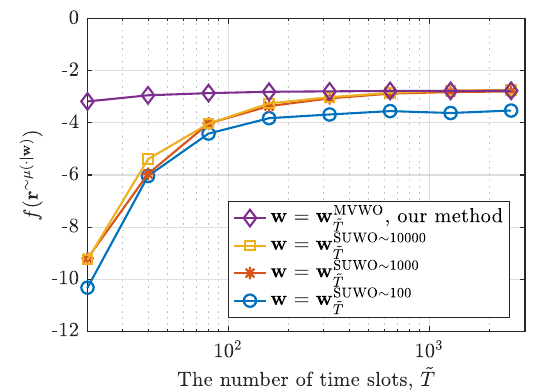}
\caption{$K=6$.}
\label{subfig:plot_6_user_pf_ctime_line}
\end{subfigure}
\caption{Values of the utility function achieved by MWSs optimized by the proposed MVWO method and the SUWO methods when $\tilde{T}$ time slots are used.}
\label{fig:plot_pf_ctime_line}
\vspace{-0.2cm}
\end{figure}
\begin{figure}[!ht]
\begin{subfigure}[b]{1\columnwidth}
\centering
\includegraphics[scale=0.85]{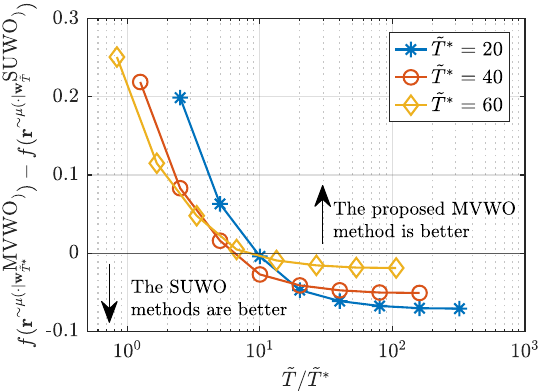}
\caption{$K=3$.}
\label{subfig:plot_3_user_pf_ctime_line_compare}
\end{subfigure}
\begin{subfigure}[b]{1\columnwidth}
\centering
\includegraphics[scale=0.85]{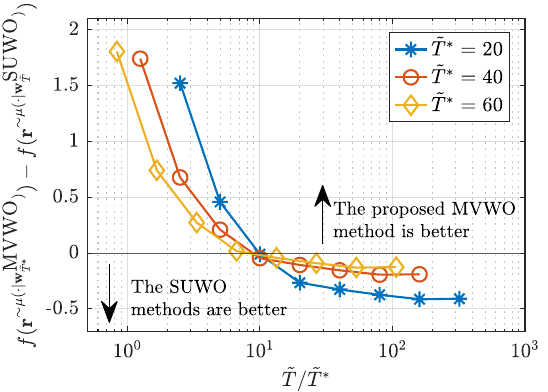}
\caption{$K=6$.}
\label{subfig:plot_6_user_pf_ctime_line_compare}
\end{subfigure}
\caption{Difference in the utility function achieved by MWSs optimized by the proposed MVWO method and the SUWO methods when $\tilde{T}^*$ and $\tilde{T}$ time slots are used, respectively.}
\label{fig:plot_pf_ctime_line_compare}
\vspace{-0.2cm}
\end{figure}

Next, we compare the time complexity (i.e., the number of time slots required) to optimize the weights in the proposed MVWO method and the SUWO methods \cite{tse1999transmitter,agrawal2002optimality}.
We denote the optimal weights found by solving the MVWO in \eqref{eq:prob:deterministic_weight_optimization_G} based on the mean and variance of the SNRs estimated with $\tilde{T}$ slots as $\mathbf{w}^{\rm MVWO}_{\tilde{T}}$. 
Fig. \ref{fig:plot_pf_ctime_line} illustrates the difference in the value of the utility function, $f\big(\mathbf{r}^{\sim\mu(\cdot|\mathbf{w})})$, achieved by different MWSs, $\mu(\cdot|\mathbf{w})$, where the weights, $\mathbf{w}$, are found either by the SUWO methods \cite{tse1999transmitter,agrawal2002optimality} and the proposed MVWO method, (with the legends ``$\mathbf{w}=\mathbf{w}^{{\rm SUWO}\sim\gamma}_{\tilde{T}}$'' and ``$\mathbf{w}=\mathbf{w}^{{\rm MVWO}}_{\tilde{T}}$, our method'', respectively). Here, $\gamma$ are varied as $100$, $1000$ and $10000$ in the SUWO methods. $\hat{\epsilon}$ is set to $10^{-4}$ in the proposed MVWO method. The number of users, $K$, is set as $3$ and $6$ in Figs. \ref{subfig:plot_3_user_pf_ctime_line} and \ref{subfig:plot_6_user_pf_ctime_line}. 
Each point in Fig. \ref{fig:plot_pf_ctime_line} is plotted based on the average value of $f\big(\mathbf{r}^{\sim\mu(\cdot|\mathbf{w})})$ in $100$ episodes. We configure the mean and variance of users' SNRs in each episode as explained in Section \ref{subsec:simulation_config}. 
The values of $\mathbf{r}^{\sim\mu(\cdot|\mathbf{w})}$ are averaged over $10^{5}$ slots for given weights in each episode.
The results in Fig. \ref{fig:plot_pf_ctime_line} indicate that more slots spent in estimating the mean and variance of the SNRs help improve the performance of the weights found by the proposed MVWO method. 
This is because the estimated mean and variance are more accurate when more slots are used.
Results in Fig. \ref{fig:plot_pf_ctime_line} show that the proposed MVWO method converges to the same utility function value as the SUWO methods. Note that the SUWO methods converge to the optimal weights that maximize the utility function \cite{tse2005fundamentals} when $\gamma$ and $\tilde{T}$ are large enough, e.g., $\gamma=1000$ and $\tilde{T}=1000$ in Fig. \ref{fig:plot_pf_ctime_line}. This implies that our method achieves a near-optimal performance with a negligible performance loss.
Also, we observe that the performance of the proposed MVWO method reaches the highest value when it uses approximately $80$ slots, while the performance of the SUWO methods reaches the same value for approximately $320\sim640$ slots.
This indicates that the proposed MVWO method costs $4\sim8$ times fewer system time slots to find the optimal weights. Our method finds optimal weights because the proposed rate approximation method closely estimates the feasible rate region and the average rates for given weights. This estimation accurately represents the system's behaviors and MWSs' performance, and consequently, no online weight adjustment is required.

We keep the same configuration as the above and fix the number of time slots used to estimate the mean and variance of SNRs in our MVWO method as $\tilde{T}^*$, while varying the number of time slots spent in weight tuning in the SUWO methods  \cite{tse1999transmitter,agrawal2002optimality} as $\tilde{T}$, where $\gamma$ is set to $1000$. We measure the difference between the averaged value of the utility function for various ratios of $\tilde{T}$ to $\tilde{T}^*$ when $\tilde{T}^*$ is $20$, $40$ and $60$. 
Figs. \ref{subfig:plot_3_user_pf_ctime_line_compare} and \ref{subfig:plot_6_user_pf_ctime_line_compare} indicate that the proposed MVWO method performs better than the existing SUWO methods when the ratio, $\tilde{T}/\tilde{T}^*$, is less than $10$ and otherwise when the ratio is larger than $10$. This implies that our methods spend $10$ times fewer time slots than the SUWO methods to reach the same performance, and our method's performance is better when the same number of time slots are used in both methods (i.e., when $\tilde{T}/\tilde{T}^*=1$). 

Overall, the simulation results in Fig. \ref{fig:plot_pf_ctime_line} and Fig. \ref{fig:plot_pf_ctime_line_compare} indicate that our method has much lower time complexity than the existing SUWO methods. This is because our method directly uses the measured statistical CSI, while it does not depend on user selection decisions of the MWS. In contrast, the existing SUWO methods require a time average of scheduled instantaneous bit rates, as shown in \eqref{eq:pf_scheduler}, which converges only after each user is scheduled sufficient times.

\begin{figure}[!t]
\centering
\includegraphics[scale=0.85]{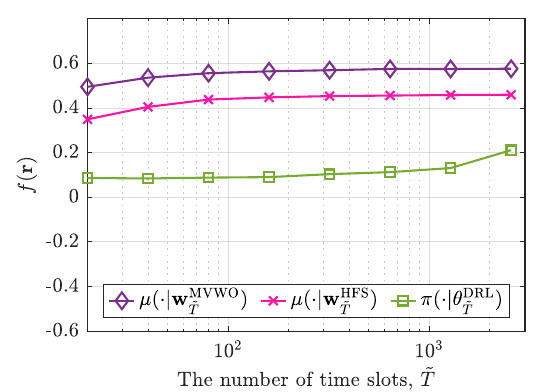}
\caption{Values of the utility function achieved by the MWS optimized by our MVWO method, the HFS in \eqref{eq:hf_scheduler}, and the MDP-based OS optimized by DRL \cite{chen2021bringing} when $K=3$.}
\label{fig:plot_3_user_pf_ctime_line_other_method}
\vspace{-0.2cm}
\end{figure}

In Fig. \ref{fig:plot_3_user_pf_ctime_line_other_method}, we further use the same configuration when $K=3$ and compare the values of the utility function achieved by the MWS optimized by our MVWO method (with the legend ``$\mu(\cdot|\mathbf{w}^{{\rm MVWO}}_{\tilde{T}})$''), the HFS in \eqref{eq:hf_scheduler} (with the legend ``$\mu(\cdot|\mathbf{w}^{{\rm HFS}}_{\tilde{T}})$''), and the MDP-based OS optimized by the DRL method \cite{chen2021bringing} (with the legend ``$\pi(\cdot|\theta^{{\rm DRL}}_{\tilde{T}})$''). The results show that the HFS has a close convergence speed to our method. This is because both methods configure their weights based on the estimation of the statistical CSI. However, the HFS performs worse than our method when both methods converge. This is because our MVWO method exploits the correlation between scheduling actions and channel states based on their first and second moments, which cannot be done by the HFS that only uses the first moment of the channel's statistics. 
The results also show that the DRL method has much slower convergence and performs worse than ours. This is because the NN in the DRL method contains more parameters than MWSs, which can hardly be trained within a short time (e.g., within 1000 time slots). This observation on the time complexity of DRL is consistent with \cite{chen2021bringing}.

\subsection{Performance of Online MVWO Architecture for Varying Mean and Variance of SNRs}
\begin{figure}[!t]
\centering
\includegraphics[scale=0.85]{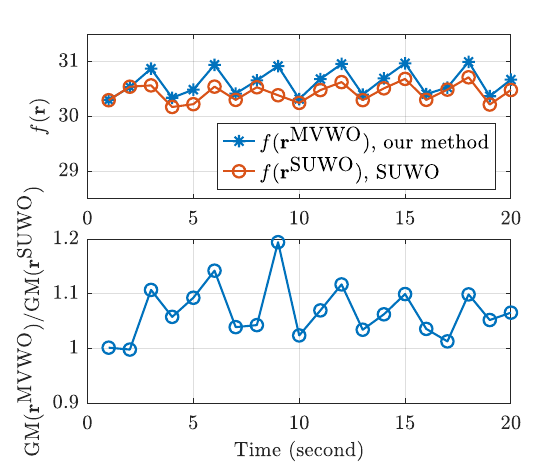}
\caption{The difference in the performance of MWSs optimized by the online MVWO architecture and the SUWO methods \cite{tse1999transmitter,agrawal2002optimality}, where $K=5$.}
\label{fig:plot_online_converge}
\vspace{-0.2cm}
\end{figure}
We then compare the performance of the MVWO method in the proposed online architecture in Section \ref{sec:online_MWS_archiecture} to the SUWO methods \cite{tse1999transmitter,agrawal2002optimality} in the network where the SNRs' mean and variance vary over time. Unlike in the previous case, the scenario in this simulation is closer to real-world networks as users have time-varying large-scale fading due to their mobility.
We assume the number of users, $K$, is $3$, and each user moves $5$ meters per second backward and forward between two points on the ray line from the BS, which are at $20$ and $35$ meters away from the BS. The initial position of user $k$ is at $20+7.5k$ meters away from the BS, $k=1,\dots,K$. The BS's transit power spectrum density is $0$ dBm/Hz, and the noise spectrum density is $-90$ dBm/Hz. The large-scale fading follows a path loss model as $45+30\log_{10}(l)$ dB, where $l$ is the distance between a user and the BS in meters. The small-scale fading is the same as in Section \ref{subsec:simulation_config}.
We use a typical periodicity of user's feedback on SNRs as the duration of a time slot, e.g., $\Delta_\text{0}$ is $10$ milliseconds \cite{3gpp.38.214}, and the bandwidth, $B$, is set as $5$ MHz.
We measure users' average rates every $1$ second (or every 100 slots) during $20$ seconds (or $2000$ time slots). We denote the measured average rates achieved by our method and the SUWO methods \cite{tse1999transmitter,agrawal2002optimality} as $\mathbf{r}^\textrm{MVWO}$ and $\mathbf{r}^\textrm{SUWO}$, respectively. 
Since the users' rates are averaged every 100 slots, the exponential average window, $\gamma$ of \eqref{eq:pf_scheduler}, in the compared SUWO methods is set to the same time scale, e.g., $\gamma=100$ \cite{tse2005fundamentals}. For our method, we set $\beta=20$ in \eqref{eq:m_v_measure_online} and $\hat{\epsilon} = 10^{-4}$ in Algorithm \ref{alg:DWO_algorithm}.
The difference between the utility function in the above two methods can be written as
\begin{equation}\label{eq:f_to_gm}
\begin{aligned}
&f(\mathbf{r}^\textrm{MVWO}) - f(\mathbf{r}^\textrm{SUWO})= \sum_{k=1}^{K}\ln r^\textrm{MVWO}_k -  \sum_{k=1}^{K}\ln r^\textrm{SUWO}_k \\
=&\ln \frac{\prod_{k=1}^{K} r^\textrm{MVWO}_k}{\prod_{k=1}^{K} r^\textrm{SUWO}_k}
=  K  \ln \frac{\textrm{GM}(\mathbf{r}^\textrm{MVWO})}{\textrm{GM}(\mathbf{r}^\textrm{SUWO})} \ ,
\end{aligned}
\end{equation}
where $r^\textrm{MVWO}_k$ and $ r^\textrm{SUWO}_k$ represent user $k$'s rate in $\mathbf{r}^\textrm{MVWO}$ and $\mathbf{r}^\textrm{SUWO}$, respectively.
We quantitatively compare the performance of the two methods in terms of the ratio between the geometric mean of users' average rates, $\textrm{GM}(\mathbf{r}^\textrm{MVWO})/\textrm{GM}(\mathbf{r}^\textrm{SUWO})$, as shown in \eqref{eq:f_to_gm}.
Fig. \ref{fig:plot_online_converge}a shows the value of the utility function every second in our method and the SUWO methods (with the legends ``$f(\mathbf{r}^\textrm{MVWO})$, our method'' and ``$f(\mathbf{r}^\textrm{SUWO})$, SUWO'', respectively), which indicates that our method achieves a higher value of the utility function.
Fig. \ref{fig:plot_online_converge}b illustrates the ratio of the geometric mean of the average rates in two methods, where our method has a $5\sim15\%$ improvement in geometrically averaged rates of users. This is because our MVWO method finds the optimal weights faster than other methods when the channel's statistics change as users move.

\section{Conclusion}
In this paper, we proposed a method to design weights in MWSs by using the limited prior knowledge of statistical CSI. 
Specifically, we computed MWSs' average rates by solving the rate estimation problem based on the mean and variance of users' SNRs.
We formulated the MVWO problem based on the estimated MWSs' rates and proposed an iterative solver, where the iterated weights are proved to converge to the optimal weights.
Also, we designed an online architecture to apply our MVWO method in networks with varying SNRs' mean and variance.
We conducted simulations to validate the accuracy of the rate estimation, the convergence of the proposed solver and the optimality of the weights designed by our MVWO method. Simulations show that our MVWO method consumes $4\sim10$ times fewer time slots in finding the optimal weights and achieves $5\sim15\%$ better average data rates of users than SUWO methods.

\newpage
\section*{Appendix: The Proof of Corollary \ref{corollary:mws_as_solution_of_weighting_problem}}
\begin{proof}
By applying the linearity of the inner product, we obtain
\begin{equation}\label{eq:}
\begin{aligned}
&\langle\mathbf{w},\mathbf{r}^{\sim\mu(\cdot|\mathbf{w})}\rangle \\
=&\lim_{T\to \infty} \frac{1}{T} \sum_{t=1}^{T}\sum_{k=1}^{K} w_k \mu_k\big(\mathbf{s}(t)|\mathbf{w}\big)\Delta_\text{0} B \log_2\big(1+\phi_k(t)\big) \\
\stackrel{\text{(a)}}{\geq}&
\lim_{T\to \infty} \frac{1}{T} \sum_{t=1}^{T} \sum_{k=1}^{K}w_k x_k(t)\Delta_\text{0} B \log_2\big(1+\phi_k(t)\big) \ , \\
&\qquad\qquad\forall \mathbf{x}(t)\ \text{s.t.} \ \eqref{eq:const:channel_access_binary}, \ \eqref{eq:const:channel_access} \ ,
\end{aligned}
\end{equation}
where (a) uses the fact that $\mu(\cdot|\mathbf{w})$ maximizes the weighted sum of instantaneous data rates in each slot, as shown in \eqref{eq:max_weight_scheduler}.
By applying the definition of feasible rate region $\mathcal{F}$ from \eqref{eq:defi:feasible_rate_region} to the above, we have $\langle\mathbf{w},\mathbf{r}^{\sim\mu(\cdot|\mathbf{w})}\rangle\geq \langle\mathbf{w},\mathbf{r}\rangle$, $\forall \mathbf{r}\in \mathcal{F}$.
\end{proof}

\section*{Appendix: Proof of the Boundedness of the Bounding Set}\label{proof:lemma:g_compat_convex}
\begin{proof}
For a given $k\in\{1,\dots,K\}$, let the elements of a vector, $\mathbf{z}$, be $0$ except its $k$-th and $(k+K)$-th element, $z_k$ and $z_{k+K}$.
Then, the positive semidefiniteness of $\mathbf{H}$ leads to
\begin{equation}
\begin{aligned}
&\mathbf{z}^{\rm T}\mathbf{H}\mathbf{z}  \\
= &(z_k)^2 H^{xx}_{k,k}+  2z_kz_{k+K} H^{x\phi}_{k,k} + (z_{k+K})^2 H^{\phi\phi}_{k,k}\geq 0 \ ,\\ 
& \forall (z_k,z_{k+K}) \in \mathbb{R}^2\ , \forall k\ ,
\end{aligned}
\end{equation}
which implies $H^{xx}_{k,k}H^{\phi\phi}_{k,k}\geq (H^{x\phi}_{k,k})^2$, $\forall k$.
By substituting \eqref{eq:const:covariance_matrix_xpsi}\eqref{eq:const:covariance_matrix_psipsi}\eqref{eq:const:covariance_matrix_xx} into the above, we obtain
\begin{equation}\label{eq:upper_bound_of_y}
\begin{aligned}
&p_k(1-p_k)\cdot v^{\phi}_{k} \geq (y_k - p_k m^{\phi}_{k})^2 \ ,\\ \Rightarrow\  &y_k \leq \sqrt{p_k(1-p_k)\cdot v^{\phi}_{k}} +  p_k m^{\phi}_{k} \ ,\forall k \ .
\end{aligned}
\end{equation}
Consider the inequality in \eqref{eq:const:rate_upper_bound}, where we assume  $\Delta_\text{0}=1$ and $B=1$ in order to simplify the notation without generality,
\begin{equation}\label{}
\begin{aligned}
r_k &\leq p_k\log_2(1+\frac{y_k}{p_k})\\
&\stackrel{\text{(a)}}{\leq} p_k\log_2(1+\frac{[\sqrt{p_k(1-p_k)\cdot v^{\phi}_{k}} + p_k m^{\phi}_{k} ]}{p_k})\\
&\stackrel{\text{(b)}}{\leq} p_k\log_2(1+\frac{\sqrt{v^{\phi}_{k}} + m^{\phi}_{k}}{p_k}) \  ,
\end{aligned}
\end{equation}
where (a) uses the inequality in \eqref{eq:upper_bound_of_y} and (b) uses the fact that $p_k\leq1$ and $1-p_k\leq1$ in the numerator of the fraction.
We write $\alpha_k\triangleq \sqrt{v^{\phi}_{k}} + m^{\phi}_{k}$ to simplify the above notation as
\begin{equation}\label{}
\begin{aligned}
r_k &\leq p_k\log_2(1+\frac{\alpha_k}{p_k})\\
&= (p_k+\alpha_k)\log_2(p_k+\alpha_k) \\
&\qquad\qquad\qquad - \alpha_k\log_2(p_k+\alpha_k)- p_k\log_2(p_k), \forall k\ .
\end{aligned}
\end{equation}
Because $(p_k+\alpha_k)\log_2(p_k+\alpha_k)\leq(1+\alpha_k)\log_2(1+\alpha_k)$, $\alpha_k\log_2(p_k+\alpha_k)\geq\alpha_k\log_2(\alpha_k)$ and $p_k\log_2(p_k)\geq\frac{1}{\textrm{e}}\log_2\frac{1}{\textrm{e}}$, we obtain an upper bound on $r_k$ as
\begin{equation}\label{}
\begin{aligned}
r_k \leq (1+\alpha_k)\log_2(1+\alpha_k)- \alpha_k\log_2(\alpha_k)- \frac{1}{\textrm{e}}\log_2\frac{1}{\textrm{e}},  \forall k \ .
\end{aligned}
\end{equation}
Also, $r_k\geq0$, $\forall k$, as shown in \eqref{eq:const:rate_lower_bound}, which implies $r_k$ is bounded $\forall k$. Since all dimensions of $\mathbf{r}$ are bounded, $\mathbf{r}$ is bounded and so is $\mathcal{G}$. 
\end{proof}

\section*{Appendix: Proof of Corollary \ref{corollary:feasibility_w_per_iteration} }\label{proof:corollary:feasibility_w_per_iteration}
We first check the sign of $a^{(i)}$ and $b^{(i)}$. Note that $\mathbf{r}^{(i)}$ is the optimal solution of \eqref{eq:lower_lvl_per_iteration}, which implies $\langle\mathbf{w}^{(i)},\mathbf{r}^{(i)}\rangle>\langle\mathbf{w}^{(i)},\mathbf{r}^*\rangle$. Based on this fact, we can determine that $a^{(i)}>0$.
Suppose $b^{(i)}\leq0$, then $\min\{(\mathbf{r}^*-\mathbf{r}^{(i)})\oslash\mathbf{w}^{(i)}\}\geq0$. This implies $(\mathbf{r}^*-\mathbf{r}^{(i)})\geq0$ and $\langle\mathbf{w}^{(i)},\mathbf{r}^*-\mathbf{r}^{(i)}\rangle\geq0$, which is contradictory to the optimality of $\mathbf{r}^{(i)}$. Therefore, $b^{(i)}>0$.

Based on sign of $a^{(i)}$ and $b^{(i)}$, we can prove that $\mathbf{u}^{(i)}>0$. To achieve this, we first check the sign of the elements in $b^{(i)}\mathbf{w}^{(i)}+(\mathbf{r}^*-\mathbf{r}^{(i)})$, whose $k$-th element is 
\begin{equation}\label{}
\begin{aligned}
&b^{(i)} w^{(i)}_k + r^*_k - r^{(i)}_k =  w^{(i)}_k  (b^{(i)}+ \frac{r^*_k - r^{(i)}_k}{w^{(i)}_k} ) \\
\geq& w^{(i)}  (b^{(i)}+ \min\{(\mathbf{r}^*-\mathbf{r}^{(i)})\oslash\mathbf{w}^{(i)}\} ) = 0 \ , \ \forall k \ .
\end{aligned}
\end{equation}
This implies $b^{(i)}\mathbf{w}^{(i)}+(\mathbf{r}^*-\mathbf{r}^{(i)}) \geq 0 $. By adding $a^{(i)}\mathbf{w}^{(i)}$ to the above, we obtain that $\mathbf{u}^{(i)}>0$. Also, $\mathbf{w}^{(i+1)}$ is normalized $\mathbf{u}^{(i)}$, which implies $\mathbf{w}^{(i+1)}>0$ and $\|\mathbf{w}^{(i+1)}\|_2=1$.

\section*{Appendix: Proof of Lemma \ref{lemma:ww_per_iteration}}\label{proof:lemma:ww_per_iteration}
\begin{proof}

To prove the first statement, we substitute \eqref{eq:u_per_iteration_definition} into $\langle\mathbf{w}^{(i+1)},\mathbf{w}^*\rangle$ as
\begin{equation}\label{eq:proof:ww_inner_product}
\begin{aligned}
&\langle\mathbf{w}^{(i+1)},\mathbf{w}^*\rangle = \langle\frac{a^{(i)}+b^{(i)}}{\|\mathbf{u}^{(i)}\|_2}\mathbf{w}^{(i)}+\frac{1}{\|\mathbf{u}^{(i)}\|_2}(\mathbf{r}^*-\mathbf{r}^{(i)}),\mathbf{w}^*\rangle\\
=&\frac{a^{(i)}+b^{(i)}}{\|\mathbf{u}^{(i)}\|_2}\langle\mathbf{w}^{(i)},\mathbf{w}^*\rangle+\frac{1}{\|\mathbf{u}^{(i)}\|_2}\langle\mathbf{r}^*-\mathbf{r}^{(i)},\mathbf{w}^*\rangle\\
\stackrel{\text{(a)}}{>} &\frac{a^{(i)}+b^{(i)}}{\|\mathbf{u}^{(i)}\|_2}\langle\mathbf{w}^{(i)},\mathbf{w}^*\rangle\ ,
\end{aligned}
\end{equation}
where (a) is because $\langle\mathbf{w}^*,\mathbf{r}^*\rangle>\langle\mathbf{w}^*,\mathbf{r}^{(i)}\rangle$. Also, we have
\begin{equation}\label{eq:alg:v_r_leq_0}
\begin{aligned}
&\langle\mathbf{u}^{(i)},\mathbf{r}^*-\mathbf{r}^{(i)}\rangle=\langle(a^{(i)}+b^{(i)})\mathbf{w}^{(i)}+(\mathbf{r}^*-\mathbf{r}^{(i)}),\mathbf{r}^*-\mathbf{r}^{(i)}\rangle\\
&= \frac{\|\mathbf{r}^*-\mathbf{r}^{(i)}\|^2_2}{\langle\mathbf{w}^{(i)},\mathbf{r}^{(i)}-\mathbf{r}^*\rangle}\langle\mathbf{w}^{(i)},\mathbf{r}^*-\mathbf{r}^{(i)}\rangle\\
&\qquad\qquad\qquad +  \langle b^{(i)}\mathbf{w}^{(i)},\mathbf{r}^*-\mathbf{r}^{(i)}\rangle+\|\mathbf{r}^*-\mathbf{r}^{(i)}\|_2^2\\
&=0+b^{(i)}\langle\mathbf{w}^{(i)},\mathbf{r}^*-\mathbf{r}^{(i)}\rangle\stackrel{\text{(a)}}{<}0 \ ,
\end{aligned}
\end{equation}
where (a) uses the fact that $\langle\mathbf{w}^{(i)},\mathbf{r}^*\rangle<\langle\mathbf{w}^{(i)},\mathbf{r}^{(i)}\rangle$. 
The square of $\ell_2$-norm of $\mathbf{u}^{(i)}$ is
\begin{equation}\label{eq:v_to_a_1}
\begin{aligned}
&\|\mathbf{u}^{(i)}\|_2^2 = \langle\mathbf{u}^{(i)},\mathbf{u}^{(i)}\rangle \\
=&\langle (a^{(i)}+b^{(i)})\mathbf{w}^{(i)}+(\mathbf{r}^*-\mathbf{r}^{(i)}), \mathbf{u}^{(i)}\rangle\\
\stackrel{\text{(a)}}{<}& \langle (a^{(i)}+b^{(i)})\mathbf{w}^{(i)}, \mathbf{u}^{(i)}\rangle\\
=& (a^{(i)}+b^{(i)})^2 + (a^{(i)}+b^{(i)})\langle\mathbf{w}^{(i)},\mathbf{r}^*-\mathbf{r}^{(i)}\rangle \ ,
\end{aligned}
\end{equation}
where (a) uses the inequality in \eqref{eq:alg:v_r_leq_0}.
By dividing $(a^{(i)}+b^{(i)})^2$ at each term in the last inequality, we obtain
\begin{equation}\label{eq:v_to_a}
\begin{aligned}
&\left(\frac{\|\mathbf{u}^{(i)}\|_2}{a^{(i)}+b^{(i)}}\right)^2 < 1 + \frac{\langle\mathbf{w}^{(i)},\mathbf{r}^*-\mathbf{r}^{(i)}\rangle}{a^{(i)}+b^{(i)}}\\
\Rightarrow &\left(\frac{\|\mathbf{u}^{(i)}\|_2}{a^{(i)}+b^{(i)}}\right)^2 < 1 - \frac{(\langle\mathbf{w}^{(i)},\mathbf{r}^{(i)}-\mathbf{r}^*\rangle)^2}{(a^{(i)}+b^{(i)})\langle\mathbf{w}^{(i)},\mathbf{r}^{(i)}-\mathbf{r}^*\rangle},
\end{aligned}
\end{equation}
and substituting $a^{(i)}$ and $b^{(i)}$ of \eqref{eq:a_per_iteration_definition} into the denominator in the RHS of the above
\begin{equation}\label{eq:a_b<r}
\begin{aligned}
&(a^{(i)}+b^{(i)} )\langle\mathbf{w}^{(i)},\mathbf{r}^{(i)}-\mathbf{r}^*\rangle\\
=&  \|\mathbf{r}^*-\mathbf{r}^{(i)}\|_2^2 + \sum_k b^{(i)} w^{(i)}_k(r^{(i)}_k-r^*_k) \\
<  &\|\mathbf{r}^*-\mathbf{r}^{(i)}\|_2^2 + \sum_{k: r^{(i)}_k-r^*_k \leq 0 } \frac{r^{(i)}_k-r^*_k}{w^{(i)}_k}w^{(i)}_k(r^{(i)}_k-r^*_k) \\
&\qquad\qquad + \sum_{k: r^{(i)}_k-r^*_k > 0 } b^{(i)} w^{(i)}_k(r^{(i)}_k-r^*_k) \\
\approx   &\|\mathbf{r}^*-\mathbf{r}^{(i)}\|_2^2 + \sum_k \frac{r^{(i)}_k-r^*_k}{w^{(i)}_k}w^{(i)}_k(r^{(i)}_k-r^*_k) \\
=& 2\|\mathbf{r}^*-\mathbf{r}^{(i)}\|_2^2 \ ,
\end{aligned}
\end{equation}
where we note that both $\mathbf{r}^{(i)}$ and $\mathbf{r}^*$ are vectors in $\mathcal{G}$, which implies $\|\mathbf{r}^*-\mathbf{r}^{(i)}\|_2\leq\hat{R}$ based on the definition of $\hat{R}$.
By applying \eqref{eq:a_b<r} to \eqref{eq:v_to_a}, we can obtain
\begin{equation}\label{eq:v_to_a_2}
\begin{aligned}
&\left(\frac{\|\mathbf{u}^{(i)}\|_2}{a^{(i)}+b^{(i)}}\right)^2 < 1 - \frac{(\langle\mathbf{w}^{(i)},\mathbf{r}^{(i)}-\mathbf{r}^*\rangle)^2}{2\hat{R}^2} < 1 \ ,\\
\Rightarrow& \frac{\|\mathbf{u}^{(i)}\|_2}{a^{(i)}+b^{(i)}} < \left[1 - \frac{(\langle\mathbf{w}^{(i)},\mathbf{r}^{(i)}-\mathbf{r}^*\rangle)^2}{2\hat{R}^2}\right]^{\frac{1}{2}} < 1 \ .
\end{aligned}
\end{equation}
Note that $\hat{R}$ is finite because $\mathcal{G}$ is bounded.
By substituting \eqref{eq:v_to_a_2} in \eqref{eq:proof:ww_inner_product}, we obtain \eqref{eq:lemma:ww_monotonic}. 

For the second statement, the inner product between $\mathbf{w}^{(i)}$ and $\mathbf{w}^*$ is
\begin{equation}
\begin{aligned}
&\langle\mathbf{w}^{(i)},\mathbf{w}^*\rangle
\stackrel{\text{(a)}}{\geq} \langle\mathbf{w}^{(1)},\mathbf{w}^*\rangle
= \frac{1}{\sqrt{K}}\sum_{k=1}^{K}w^*_k\\
&\stackrel{\text{(b)}}{\geq} \frac{1}{\sqrt{K}}\sum_{k=1}^{K}(w^*_k)^2= \frac{1}{\sqrt{K}} \ ,
\end{aligned}
\end{equation}
where (a) uses the fact that $\langle\mathbf{w}^{(i)},\mathbf{w}^*\rangle$ is monotonic increasing based on the first statement in Lemma \ref{lemma:ww_per_iteration} and (b) is because $w_k$ is less than or equal to $1$, $k=1,\dots,K$. Further, by applying Cauchy–Schwarz inequality, we obtain that $\langle\mathbf{w}^{(i)},\mathbf{w}^*\rangle=|\langle\mathbf{w}^{(i)},\mathbf{w}^*\rangle|\leq \|\mathbf{w}^{(i)}\|_2\|\mathbf{w}^*\|_2=1$.

\end{proof}

\section*{Accuracy of Rate Estimation When SNRs'~Variance Increases}
\begin{figure}[!ht]
\centering
\includegraphics[scale=0.8]{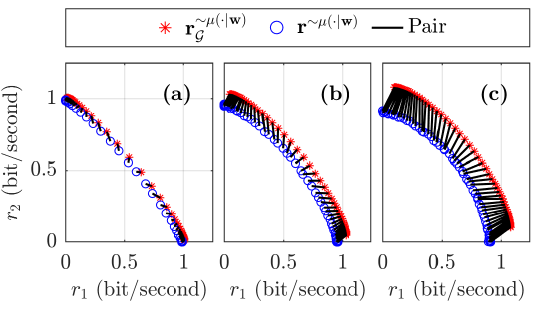}
\caption{The average rates achieved by $\mu(\cdot|\mathbf{w})$ and their estimated values in \eqref{eq:prob:estimation_mws_r_g} for sweeping $\mathbf{w}$ when the SNRs' variance increases. The variance of SNRs is (a) $\exp(0.25^2)-1$, (b) $\exp(0.5^2)-1$ and (c) $\exp(0.75^2)-1$.}
\label{fig:plot_bs_shape_sigma_lgn}
\vspace{-0.2cm}
\end{figure}
We identify that the proposed rate estimation method becomes less tight when the variance of the SNRs increases. To illustrate this effect, we consider a two-user case where both users have the same SNR distribution. To easily adjust the variance, we assume that the SNRs follow a normalized log-normal distribution with a mean of $1$ and a variance of $\exp(\sigma^2)-1$ ($\sigma$ is a parameter determining the variance). We simulate the cases where $\sigma$ is $0.25$, $0.5$ and $0.75$ in Fig. \ref{fig:plot_bs_shape_sigma_lgn}. The results show that when the variance of SNRs increases, the gap between the measured rates and the estimated rates increases. This is because Jenson's equality in \eqref{eq:rate_concave} becomes less tight when the variance of SNRs increases. We note that the simulations here only show the relation between the estimation accuracy and the variance of SNRs, while they do not represent the practical scenarios.

\bibliography{ref}
\bibliographystyle{IEEEtran}

\begin{IEEEbiography}
[{\includegraphics[width=1in,height=1.25in,clip,keepaspectratio]{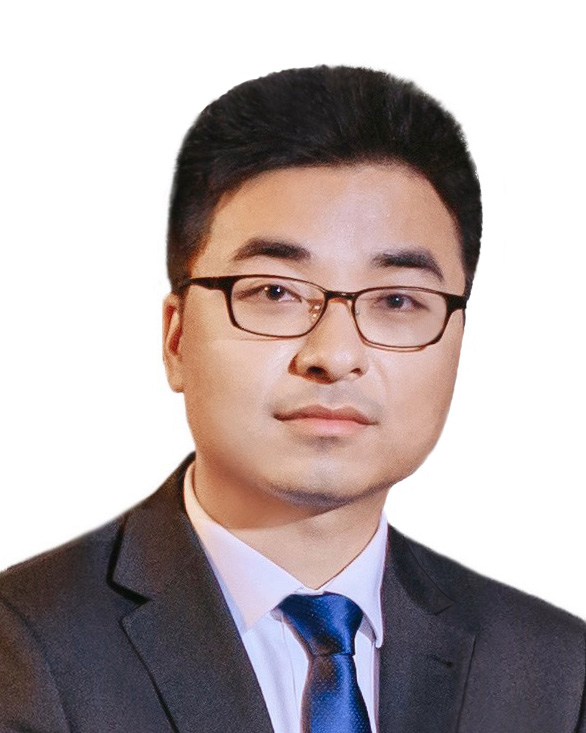}}]{Zhouyou Gu} received the B.E. (Hons.) and M.Phil.
degrees from The University of Sydney (USYD), Australia,
in 2016 and 2019, respectively, where he completed his Ph.D. degree with the School of Electrical and Information Engineering in 2023. 
He was a research assistant at the Centre for IoT and Telecommunications at USYD. His research interests include real-time scheduler designs, programmability, and graph and machine-learning methods in wireless networks.
\end{IEEEbiography}

\begin{IEEEbiography}
[{\includegraphics[width=1in,height=1.25in,clip,keepaspectratio]{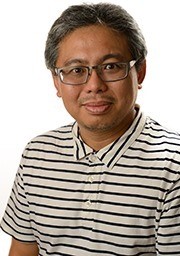}}]{Wibowo Hardjawana} (Senior Member, IEEE) received a PhD in electrical engineering from The University of Sydney, Australia. He is currently a Senior Lecturer in Telecommunications with the School of Electrical and Information Engineering at the University of Sydney. Before that, he was with Singapore Telecom Ltd., managing core and radio access networks. His current fundamental and applied research interests are in AI applications for 5/6G cellular radio access and Wi-Fi networks. He focuses on system architectures, resource scheduling, interference, signal processing, and the development of wireless standard-compliant prototypes. He has also worked with several industries in the area of 5G and long-range Wi-Fi. He was an Australian Research Council Discovery Early Career Research Award Fellow.
\end{IEEEbiography}

\begin{IEEEbiography}
[{\includegraphics[width=1in,height=1.25in,clip,keepaspectratio]{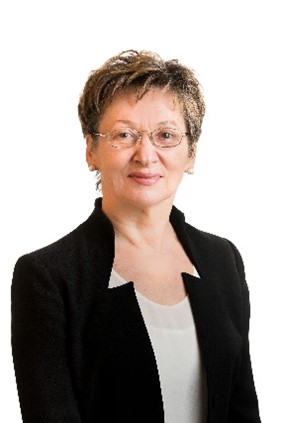}}]{Branka Vucetic} (Life Fellow, IEEE) received the B.S., M.S., and Ph.D. degrees in electrical engineering from the University of Belgrade, Belgrade, Serbia. She is an Australian Laureate Fellow, a Professor of Telecommunications, and the Director of the Centre for IoT and Telecommunications, the University of Sydney, Camperdown, NSW, Australia. Her current research work is in wireless networks and Industry 5.0. In the area of wireless networks, she works on communication system design for 6G and wireless AI. In the area of Industry 5.0, her research is focused on the design of cyber–physical human systems and wireless networks for applications in healthcare, energy grids, and advanced manufacturing. She is a Fellow of the Australian Academy of Technological Sciences and Engineering and the Australian Academy of Science.
\end{IEEEbiography}

\end{document}